\newtheorem{simplealgorithm}{{\bf Algorithm}}
\newtheorem{theorem}{Theorem}
\newtheorem{lemma}[theorem]{Lemma}
\newtheorem{proposition}[theorem]{Proposition}
\def\ind#1{\mathbf{1}_{#1}}
\def\esp#1{\mathbb{E}\left[#1\right]}
\def\CR{C\!R}
\def\CD{C\!D}
\def\sgn{\text{sgn}}
\title{A Tale of Two Metrics:\\ Simultaneous Bounds on Competitiveness and Regret}
\author{
Lachlan L. H. Andrew\thanks{Swinburne University of Technology} \\ \texttt{landrew@swin.edu.au}
\and
Siddharth Barman\thanks{California Institute of Technology} \\ \texttt{barman@caltech.edu}
\and
Katrina Ligett\footnotemark[2] \\ \texttt{katrina@caltech.edu}
\and
Minghong Lin\footnotemark[2] \\ \texttt{mhlin@caltech.edu}
\and
Adam Meyerson\thanks{Google, Inc.} \\ \texttt{awmeyerson@gmail.com}
\and
Alan Roytman\thanks{Tel-Aviv University} \\ \texttt{alan.roytman@cs.tau.ac.il}
\and
Adam Wierman\footnotemark[2] \\ \texttt{adamw@caltech.edu}
}
\date{}
\begin{document}

\maketitle

\begin{abstract}
We consider algorithms for ``smoothed online convex optimization''
problems, a variant of the class of online convex optimization
problems that is strongly related to metrical task
systems.  Prior literature on these problems has focused on two
performance metrics: regret and the competitive ratio.  There exist known
algorithms with sublinear regret and known algorithms with constant competitive
ratios; however, no known algorithm achieves both simultaneously.  We show that this
is due to a fundamental incompatibility between these two metrics -- no algorithm
(deterministic or randomized) can achieve sublinear regret and a constant competitive
ratio, even in the case when the objective functions are linear.  However, we also
exhibit an algorithm that, for the important special case of one-dimensional decision
spaces, provides sublinear regret while maintaining a competitive ratio that grows
arbitrarily slowly.
\end{abstract}

\section{Introduction}
In an \emph{online convex optimization} (OCO) problem, a learner
interacts with an environment in a sequence of rounds.  During each
round $t$: (i) the learner must choose an action $x^t$ from a convex
decision space $F$; (ii) the environment then reveals a convex cost
function $c^t$, and (iii) the learner experiences cost $c^t(x^t)$.
The goal is to design learning algorithms that minimize the cost
experienced over a (long) horizon of $T$ rounds.

In this paper, we study a generalization of online convex optimization
that we term \emph{smoothed online convex optimization} (SOCO).  The
only change in SOCO compared to OCO is that the cost experienced by
the learner each round is now $c^t(x^t) + \|x^t-x^{t-1}\|$, where $\| \cdot \|$
is a seminorm.\footnote{Recall that a seminorm satisfies
the axioms of a norm except that $\|x\|=0$ does not imply
$x=0$.}
That is, the learner experiences a ``smoothing
cost'' or ``switching cost'' associated with changing the action, in addition
to the ``operating cost'' $c(\cdot)$.

Many applications typically modeled using online convex
optimization have, in reality, some cost associated with a change of
action.  For example, switching costs in the $k$-armed bandit setting
have received considerable attention \cite{Asawa96,Guha09}.  Additionally, a strong
motivation for studying SOCO comes from the recent developments in dynamic
capacity provisioning algorithms for data centers
\cite{kusic2007risk,LWAT11,Lin12,urgaonkar2011optimal,Minghua_CSR,zhang2012dynamic,yang2012dynamic},
where the goal is to dynamically control the number and placement of
active servers ($x^t$) in order to minimize a combination of the delay
and energy costs (captured by $c^t$) and the switching costs involved
in cycling servers into power saving modes and migrating data
($\|x^t-x^{t-1}\|$). Further,
SOCO has applications even in contexts where there are no costs
associated with switching actions.  For example, if there is concept
drift in a penalized estimation problem, it is natural to make
use of a regularizer (switching cost) term in order to control the
speed of the drift of the estimator.

\paragraph{Two communities, two performance metrics.}
Though the precise formulation of SOCO does not appear to have been
studied previously, there are two large bodies of literature on
closely related problems: (i) the online convex optimization (OCO)
literature within the machine learning community, e.g.,
\cite{Zinkevich03,Hazan07}, and (ii) the metrical task system (MTS)
literature within the algorithms community, e.g., \cite{BLS92,MGS88}.
We discuss these literatures in detail in Section
\ref{sec:background}.  While there are several differences between the
formulations in the two communities, a notable difference is that they
focus on different performance metrics.

In the OCO literature, the goal is typically to
minimize the \textbf{{\em regret}}, which is the
difference between the cost of the algorithm and the cost of the
offline optimal static solution.  In this context, a number of
algorithms have been shown to provide sublinear regret (also called ``no
regret''). For example, online gradient descent can achieve
$O(\sqrt{T})$-regret \cite{Zinkevich03}.  Though such guarantees are
proven only in the absence of switching costs, we show in Section
\ref{sec:oco} that the same regret bound also holds for SOCO.

In the MTS literature, the goal is typically to
minimize the \textbf{{\em competitive ratio}}, which is the maximum
ratio between the cost of the algorithm and the cost of the offline
optimal (dynamic) solution.  In this setting, most results tend to be
``negative,'' e.g., when $c^t$ are arbitrary, for any metric space the competitive ratio of an
MTS algorithm with states chosen from that space grows without bound as the
number of states grows \cite{BLS92,Blum92}.  However,
{these results still yield competitive ratios that do not increase with
the number of tasks, i.e., with time.  Throughout, we neglect
dependence of the competitive ratio on the number of states, and describe the
competitive ratio as ``constant'' if it does not grow with time.
Note also that}
positive results have emerged when
the cost function and decision space are convex~\cite{LWAT11}.

Interestingly, the focus on different performance metrics in the OCO
and MTS communities has led the communities to develop quite different
styles of algorithms.  The differences between the algorithms is
highlighted by the fact that \emph{all algorithms developed in the OCO
community have poor competitive
ratio and all algorithms developed in the MTS community
have poor regret}.

However, it is natural to seek algorithms with both low regret and low competitive ratio.
In learning theory, doing well for both corresponds to being able to learn both static and dynamic concepts well.
In the design of a dynamic controller, low regret shows that the control is not more risky than
static control, whereas low competitive ratio shows that the control is nearly as good as the best dynamic controller.

The first to connect the two metrics were
\cite{Blum00}, who treated the special case
where the switching costs are a fixed constant, instead of a norm.  In this
context, they showed how to translate bounds on regret to
bounds on the competitive ratio, and vice versa.
A recent breakthrough was made by
\cite{BCNS12} who used a primal-dual approach to develop an
algorithm that performs well for the ``$\alpha$-unfair competitive
ratio,'' which is a hybrid of the competitive ratio and regret that
provides comparison to the dynamic optimal when $\alpha=1$ and to the
static optimal when $\alpha=\infty$ (see Section~\ref{sec:formulation}).
Their algorithm was not shown to perform well \emph{simultaneously} for regret and the
competitive ratio, but the result highlights the feasibility of
unified approaches for algorithm design across competitive ratio and
regret.\footnote{There is also work
on achieving simultaneous guarantees with respect to the static and
dynamic optima in other settings, e.g., decision making on
lists and trees \cite{blum2002static}, and there have been
applications of algorithmic approaches from machine learning to
MTS \cite{blum1999finely,abernethy2010regularization}.}

\paragraph{Summary of contributions.}
This paper explores the relationship between
minimizing regret and minimizing the
competitive ratio.  To this end, we seek to answer the following question: ``Can an algorithm simultaneously achieve both a constant competitive ratio and a sublinear regret?''

To answer this question,
we show that \emph{there is a fundamental incompatibility
between regret and competitive ratio} --- no algorithm can maintain
both sublinear regret and a constant competitive ratio (Theorems
\ref{thm:impossibility}, \ref{thm:impossibility_online}, and \ref{th:CR1R1}).  This ``incompatibility'' does not stem from
a pathological example: it holds even for the simple case when $c^t$
is linear and $x^t$ is scalar.  Further, it holds for both
deterministic and randomized algorithms and also when the $\alpha$-unfair
competitive ratio is considered.

Though providing both sublinear regret and a constant competitive ratio is
impossible, we show that it is possible to ``nearly'' achieve this
goal.  We present an algorithm,
\emph{``Randomly Biased Greedy'' (RBG), which achieves a competitive ratio
of $(1+\gamma)$ while maintaining $O(\max\{T/\gamma,\gamma\})$ regret} for
$\gamma\ge 1$ on one-dimensional action spaces.  If $\gamma$ can be
chosen as a function of $T$, then this algorithm can
balance between regret and the competitive ratio.  For example, it can
achieve sublinear regret while having an arbitrarily slowly growing
competitive ratio, or it can achieve $O(\sqrt{T})$ regret while
maintaining an $O(\sqrt{T})$ competitive ratio.
Note that, unlike the scheme of~\cite{BCNS12}, this algorithm has
a finite competitive ratio on continuous action spaces and provides
a \emph{simultaneous} guarantee on both regret and the competitive
ratio.

\section{Problem Formulation}\label{sec:formulation}
An instance of smoothed online convex optimization (SOCO) consists of a convex decision\slash action
space $F\subseteq (\mathbb{R}^+)^n$ and a sequence of cost functions $\{c^1,c^2,\dots\}$, where each $c^t:F\to\mathbb{R}^+$.
At each time $t$, a learner\slash algorithm chooses an action vector $x^t \in F$ and the environment chooses a cost function $c^t$.
Define the \emph{$\alpha$-penalized cost with lookahead $i$} for the sequence $\dots, x^t, c^t, x^{t+1}, c^{t+1}, \dots$ to be
$$
    C_i^\alpha(A,T)=\mathbb{E}\left[ \sum_{t=1}^T c^{t}(x^{t+i})+\alpha\|x^{t+i}-x^{t+i-1}\|\right],
$$
where $x^1,\ldots,x^T$ are the decisions of algorithm $A$, the initial action is
$x^{i}=0$ without loss of generality,
the expectation is over randomness in the algorithm, and
$\|\cdot\|$ is a seminorm on $\mathbb{R}^n$.  We usually suppress the parameter $T$.

In the OCO and MTS literatures the learners pay different special cases of this cost.
In OCO the algorithm ``plays first'' giving a 0-step
lookahead and switching costs are ignored, yielding $C_0^0$.  In MTS the environment plays first
giving the algorithm 1-step lookahead ($i = 1$), and uses $\alpha=1$, yielding $C_1^1$.
Note that we sometimes omit the superscript when $\alpha=1$, and the subscript when $i=0$.

One can relate the MTS and OCO costs by relating
$C^\alpha_i$ to $C^\alpha_{i-1}$,
as done by \cite{Blum00} and~\cite{BCNS12}.  The penalty due to not having lookahead is
\begin{align} \label{eq:cost_for_offset}
c^t(x^t)-c^t(x^{t+1})
 & \le \triangledown c^t(x^t)(x^{t}-x^{t+1})
 \le \|\triangledown c^t(x^t)\|_2 \cdot \|x^t-x^{t+1}\|_2,
\end{align}
where $\|\cdot\|_2$ is the Euclidean norm.
We adopt the assumption, common in the OCO literature, that $\|\triangledown c^t(\cdot)\|_2$ are bounded on a given instance; which thus bounds the difference
between the costs of MTS and OCO (with switching cost), $C_1$ and $C_0$.

\paragraph{Performance metrics.}
The performance of a SOCO algorithm is typically evaluated by comparing its cost to that of an offline ``optimal'' solution, but the communities
differ in their choice of benchmark, and whether to compare additively or multiplicatively.

The OCO literature typically compares against the optimal offline \emph{static} action, i.e.,
$$OPT_s = \min_{x\in F} \sum_{t=1}^T c^t(x),$$
and evaluates the \textbf{\emph{regret}}, defined as the
(additive) difference between the algorithm's cost and that of the optimal static
action vector. Specifically, the regret $R_i(A)$ of Algorithm $A$ with lookahead $i$ on instances $\mathfrak{C}$,
is less than $\rho(T)$ if for any sequence of cost functions $(c^1,\ldots,c^T)\in\mathfrak{C}^T$,
\begin{equation}\label{eq:regret}
    C^0_i(A)-OPT_s \leq \rho(T).
\end{equation}
Note that for any problem and any $i\ge 1$ there exists an algorithm $A$ for which $R_i(A)$ is non-positive; however,
an algorithm that is not designed specifically to minimize regret may have $R_i(A) > 0$.

This traditional definition of regret omits switching costs and
lookahead (i.e., $R_0(A)$).  One can generalize regret to define $R'_i(A)$, by
replacing $C^0_i(A)$ with $C^1_i(A)$ in Equation~\eqref{eq:regret}.
Specifically, $R'_i(A)$
is less than $\rho(T)$ if for any sequence of cost functions $(c^1,\ldots,c^T)\in\mathfrak{C}^T$,
\begin{equation*}
    C^1_i(A)-OPT_s \leq \rho(T).
\end{equation*}
Except where noted, we use the set $\mathfrak{C}^1$ of sequences of convex functions mapping
$(\mathbb{R}^+)^n$ to $\mathbb{R}^+$ with (sub)gradient uniformly bounded over the sequence.
Note that we do not require differentiability; throughout this paper, references to gradients can be read as references to subgradients.

The MTS literature typically compares against the optimal offline (dynamic) solution,
$$
  OPT_d = \min_{x\in F^T}\sum_{t=1}^T c^{t}(x^{t})+  \|x^{t}-x^{t-1}\|,
$$
and evaluates the \textbf{\emph{competitive ratio}}.   The cost most commonly considered is $C_1$. More
generally, we say the competitive ratio with lookahead $i$, denoted by $CR_i(A)$, is $\rho(T)$ if for any
sequence of cost functions $(c^1,\ldots,c^T)\in\mathfrak{C}^T$,
\begin{equation}\label{eq:CR}
    C_i(A) \leq \rho(T) OPT_d + O(1).
\end{equation}

\paragraph{Bridging competitiveness and regret.}
Many hybrid benchmarks have been proposed to bridge static and dynamic comparisons.
For example, Adaptive-Regret
\cite{hazan2009efficient} is the maximum regret over any interval, where the ``static'' optimum can differ for different intervals, and
internal regret~\cite{blum2005internalregret} compares the online policy against a simple perturbation of that policy. We adopt the static-dynamic hybrid proposed in the most closely related literature~\cite{Blum92,Blum00,BCNS12},
the \textbf{\emph{$\alpha$-unfair competitive ratio}},  which we denote by $CR_i^\alpha(A)$ for lookahead $i$.  For
$\alpha\geq 1$,
$CR_i^\alpha(A)$ is $\rho(T)$ if Equation~\eqref{eq:CR} holds with $OPT_d$ replaced by
$$
    OPT_d^\alpha = \min_{x\in F^T}\sum_{t=1}^T c^{t}(x^{t})+ \alpha\|x^{t}-x^{t-1}\|.
$$
Specifically, the $\alpha$-unfair competitive ratio with lookahead $i$, $CR_i^\alpha(A)$, is $\rho(T)$ if for any
sequence of cost functions $(c^1,\ldots,c^T)\in\mathfrak{C}^T$,
\begin{equation*}
    C_i(A) \leq \rho(T) OPT_d^\alpha + O(1).
\end{equation*}
For $\alpha=1$, $OPT_d^\alpha$ is the dynamic optimum; as $\alpha \rightarrow\infty$, $OPT_d^\alpha$ approaches the static optimum.

To bridge the additive versus multiplicative comparisons used in the two literatures, we define the \textbf{\emph{competitive difference}}.  The $\alpha$-unfair competitive difference with lookahead $i$ on
instances $\mathfrak{C}$, $CD_i^\alpha(A)$, is $\rho(T)$ if for any sequence of cost functions $(c^1,\ldots,c^T)\in\mathfrak{C}^T$,
\begin{equation*}
    C_i(A) - OPT^\alpha_{d} \leq \rho(T).
\end{equation*}
This measure allows for a smooth transition between regret (when $\alpha$ is large enough) and an additive version of the competitive ratio when $\alpha=1$.

\section{Background} \label{sec:background}
In the following, we briefly discuss related work
on both online convex optimization and metrical task systems,
to provide context for the results in the current paper.

\subsection{Online Convex Optimization}\label{sec:oco}
The OCO problem has a rich history and a wide range of important applications.  In computer science,
OCO is perhaps most associated with
the $k$-experts problem \cite{Herbster98,Littlestone94}, a
discrete-action version of online optimization wherein at each round
$t$ the learning algorithm must choose a number between 1 and $k$, which can be viewed as following the advice of one of $k$ ``experts.''
However, OCO also has a long history in other areas,
such as portfolio management \cite{Cover91,calafiore2008multi}.

The identifying features of the OCO formulation are that (i)~the typical performance metric
considered is regret, (ii)~switching costs are not considered, and (iii)~the
learner must act before the environment reveals the cost function.  In our notation, the
cost function in the OCO literature is $C^0(A)$ and the performance
metric is $R_0(A)$.  Following \cite{Zinkevich03} and~\cite{Hazan07},
the typical assumptions are that the decision space $F$ is non-empty, bounded and closed, and that
the Euclidean norms of gradients $\|\triangledown c^t(\cdot)\|_2$ are also bounded.

In this setting, a number of algorithms have been shown to achieve ``no regret,''
i.e., sublinear regret, $R_0(A)=o(T)$. An important example
is {\em online gradient descent} (OGD), which is parameterized by learning rates
$\eta_t$. OGD works as follows.
\begin{simplealgorithm}[Online Gradient Descent, OGD]
Select an arbitrary $x^1 \in F$. At time step $t \geq 1$, select $x^{t+1}=P(x^t-\eta_t \triangledown c^t(x^t))$, where $P(y)=\arg\min_{x\in F} \|x-y\|_2$
is the projection under the Euclidean norm.
\end{simplealgorithm}
With appropriate learning rates $\eta_t$, OGD achieves sublinear regret for OCO. In particular, the variant
of \cite{Zinkevich03} uses $\eta_t=\Theta(1/\sqrt{t})$ and obtains $O(\sqrt{T})$-regret.  Tighter bounds
are possible in restricted settings. The work of~\cite{Hazan07} achieved $O(\log T)$
regret by choosing $\eta_t=1/(\gamma t)$ for settings when the cost function additionally
is twice differentiable and has minimal curvature, i.e., $\triangledown^2 c^t(x) - \gamma I_n$ is
positive semidefinite for all $x$ and $t$, where $I_n$ is the identity matrix of size $n$.
In addition to algorithms based on gradient descent, more recent algorithms such as Online Newton
Step and Follow the Approximate Leader \cite{Hazan07} also attain $O(\log T)$-regret bounds for a
class of cost functions.

None of the work discussed above considers switching costs.
To extend the literature discussed above from OCO to SOCO, we need to track the switching costs incurred
by the algorithms.
This leads to the following straightforward result, proven in Appendix~\ref{sec:proof.ocotosoco}.

\begin{proposition} \label{P.OCOTOSOCO}
Consider an online gradient descent algorithm $A$ on a finite dimensional space with learning rates such
that $\sum_{t=1}^T \eta_t=O(\rho_1(T))$.  If $R_0(A)=O(\rho_2(T))$, then we have $R'_0(A)=O(\rho_1(T)+\rho_2(T))$.
\end{proposition}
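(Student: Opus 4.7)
\medskip
\noindent\textbf{Proof plan for Proposition \ref{P.OCOTOSOCO}.}
The plan is to observe that the only difference between $R'_0(A)$ and $R_0(A)$ is the accumulated switching cost, so it suffices to bound $\sum_{t=1}^T \|x^t-x^{t-1}\|$ by $O(\rho_1(T))$. Concretely, since $C^1_0(A)=C^0_0(A)+\sum_{t=1}^T\|x^t-x^{t-1}\|$ and both $R_0$ and $R'_0$ subtract the same $OPT_s$, we get the identity
\[
R'_0(A) \;=\; R_0(A) + \sum_{t=1}^T \|x^t-x^{t-1}\|.
\]
With $R_0(A)=O(\rho_2(T))$ given, the entire argument reduces to controlling the switching sum.

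To control the per-step movement, I would use two standard facts about the OGD update $x^{t+1}=P(x^t-\eta_t \triangledown c^t(x^t))$: first, since $x^t\in F$, the Euclidean projection $P$ onto the convex set $F$ is non-expansive and fixes $x^t$, so
\[
\|x^{t+1}-x^t\|_2 \;\le\; \|(x^t-\eta_t\triangledown c^t(x^t))-x^t\|_2 \;=\; \eta_t\,\|\triangledown c^t(x^t)\|_2;
\]
second, by the OCO assumption carried through this paper, the gradient norms $\|\triangledown c^t(\cdot)\|_2$ are uniformly bounded on the instance by some constant $G$. Hence $\|x^{t+1}-x^t\|_2\le G\eta_t$.

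The final bridge from the Euclidean norm back to the seminorm $\|\cdot\|$ appearing in the switching cost uses finite-dimensionality: any seminorm on $\mathbb{R}^n$ is continuous and hence bounded on the Euclidean unit ball, giving a constant $M$ with $\|v\|\le M\|v\|_2$ for all $v\in\mathbb{R}^n$. Combining everything,
\[
\sum_{t=1}^T \|x^t-x^{t-1}\| \;\le\; M\sum_{t=1}^T \|x^t-x^{t-1}\|_2 \;\le\; MG\sum_{t=1}^T \eta_t \;=\; O(\rho_1(T)),
\]
which plugged into the displayed identity yields $R'_0(A)=O(\rho_1(T)+\rho_2(T))$.

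I do not expect a serious obstacle. The only subtlety worth naming is that $\|\cdot\|$ is a seminorm rather than a norm, which is exactly why I invoke the finite-dimensional boundedness of seminorms on the Euclidean unit sphere rather than norm equivalence; otherwise each step (non-expansiveness of projection, uniform gradient bound, and telescoping into $\sum \eta_t$) is routine.
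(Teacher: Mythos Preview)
Your proposal is correct and matches the paper's proof essentially step for step: bound the per-step Euclidean movement by $\eta_t\|\nabla c^t(x^t)\|_2$ via non-expansiveness of the projection, use the uniform gradient bound, and pass from $\|\cdot\|_2$ to the switching seminorm via a finite-dimensional comparison constant. The only (minor) difference is that the paper simply invokes equivalence of norms in finite dimensions, whereas you are slightly more careful in justifying the one-sided bound $\|v\|\le M\|v\|_2$ for a seminorm; both lead to the same $\sum_t\|x^t-x^{t-1}\|=O(\rho_1(T))$ and hence $R'_0(A)=O(\rho_1(T)+\rho_2(T))$.
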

Interestingly, the choices of $\eta_t$ used by the algorithms designed for OCO also turn out to be good choices to control the switching costs of the algorithms.
The algorithms of \cite{Zinkevich03} and~\cite{Hazan07}, which use $\eta_t=1/\sqrt{t}$ and $\eta_t=1/(\gamma t)$, are still $O(\sqrt{T})$-regret
and $O(\log T)$-regret respectively when switching costs are considered, since in these cases $\rho_1(T) = O(\rho_2(T))$.  Note that  a similar
result can be obtained for Online Newton Step \cite{Hazan07}.

Importantly, though the regret of OGD algorithms is sublinear, it can easily be
shown that the competitive ratio of these algorithms is unbounded.

\subsection{Metrical Task Systems}
Like OCO, MTS also has a rich history and a wide range of important
applications.  Historically, MTS is perhaps most
associated with the $k$-server problem \cite{CMP08}.  In this problem, there are $k$ servers, each in
some state, and a sequence of requests is incrementally revealed. To serve a request, the system
must move one of the servers to the state necessary to serve the request, which incurs a cost that depends on the source and destination states.

The formulation of SOCO in Section \ref{sec:formulation} is actually, in many ways, a special case of the most general MTS formulation.  In general, the MTS formulation differs in that (i) the cost functions $c^t$ are not assumed to be convex, (ii) the decision space is typically assumed to be discrete and is not necessarily embedded in a vector space,
and (iii) the switching cost is an arbitrary metric $d(x^t, x^{t-1})$ rather than a seminorm $\|x^t - x^{t-1}\|$.  In this context, the cost function studied by MTS is typically $C_1$ and the performance metric of interest is the competitive ratio, specifically $CR_1(A)$, although the $\alpha$-unfair competitive ratio $CR_1^\alpha$ also receives attention.

The weakening of the assumptions on the cost functions, and the fact that the competitive ratio uses the dynamic optimum as the benchmark, means that
most of the results in the MTS setting are ``negative'' when compared with those for OCO.  In particular, it has been proven that, given an arbitrary
metric decision space of size $n$, any deterministic algorithm must be $\Omega(n)$-competitive \cite{BLS92}.  Further, any randomized algorithm must
be $\Omega(\sqrt{\log n/ \log\log n})$-competitive \cite{Blum92}.

These results motivate imposing additional structure on the cost functions to attain positive results.  For example, it is commonly assumed that the metric
is the uniform metric, in which $d(x,y)$ is equal for all $x\neq y$; this assumption was made by~\cite{Blum00} in a study of the tradeoff between competitive
ratio and regret.  For comparison with OCO, an alternative natural
restriction is to impose convexity assumptions on the cost function and the decision space, as done in this paper.

Upon restricting $c^t$ to be convex, $F$ to be convex, and $\|\cdot\|$ to be a semi-norm, the MTS formulation becomes quite similar to the SOCO formulation.
This restricted class has been the focus of a number of recent papers, and some positive results have emerged.  For example,
\cite{LWAT11} showed that when $F$ is a one-dimensional normed
space,\footnote{We need only consider the absolute value norm,
since for every seminorm $\|\cdot\|$ on $\mathbb{R}$, $\|x\|=\|1\| |x|$.} a deterministic online algorithm called Lazy Capacity Provisioning (LCP) is $3$-competitive.

Importantly, though the algorithms described above provide constant competitive ratios, in all cases it is easy to see that the regret of these algorithms is linear.

\section{The Incompatibility of Regret and the Competitive Ratio}\label{sec:incompatibility}
As noted in the introduction, there is considerable motivation to perform well for regret and competitive ratio simultaneously, see also
\cite{Blum92,Blum00,BCNS12,hazan2009efficient,blum2005internalregret}.
None of the algorithms discussed so far achieves this goal.  For example, Online Gradient Descent has sublinear regret but its competitive ratio is infinite.
Similarly, Lazy Capacity Provisioning is 3-competitive but has linear regret.

This is no accident. We show below that the two goals are fundamentally incompatible: any algorithm that has
sublinear regret for OCO necessarily has an infinite competitive ratio for MTS; and any algorithm that has a constant competitive ratio for MTS necessarily has at least
linear regret for OCO.  Further, our results give lower bounds on the simultaneous guarantees that are possible.

In discussing this ``incompatibility,'' there are a number of subtleties as a result of the differences in formulation between the OCO literature, where regret is the
focus, and the MTS literature, where competitive ratio is the focus.  In particular, there are four key differences which are important to highlight: (i) OCO uses
lookahead $i=0$ while MTS uses $i=1$; (ii) OCO does not consider switching costs ($\alpha=0$) while MTS does ($\alpha=1$); (iii) regret uses an additive comparison while
the competitive ratio uses a multiplicative comparison; and (iv) regret compares to the static optimal while competitive ratio compares to
the dynamic optimal.  Note that the first two are intrinsic to the costs, while the latter are intrinsic to the performance metric.
The following teases apart which of these differences create incompatibility and which do not.  In particular, we prove that (i) and (iv) each create incompatibilities.

Our first result in this section states that there is an incompatibility between regret in the OCO setting and the competitive ratio in the MTS setting, i.e., between
the two most commonly studied measures $R_0(A)$ and $CR_1(A)$.  Naturally, the incompatibility remains if switching costs are added to regret, i.e.,
$R'_0(A)$ is considered.  Further, the incompatibility remains when the competitive difference is considered, and so both the comparison with
the static optimal and the dynamic optimal are additive.  In fact, the incompatibility remains even when the $\alpha$-unfair competitive
ratio/difference is considered.  Perhaps most surprisingly, the incompatibility remains when there is lookahead, i.e., when $C_i$ and $C_{i+1}$ are considered.

\begin{theorem}\label{thm:impossibility}
Consider an arbitrary seminorm $\|\cdot\|$ on $\mathbb{R}^n$, constants $\gamma>0$, $\alpha\geq 1$,
and $i\in \mathbb{N}$. There is a $\mathfrak{C}$ containing a single sequence of cost functions such that, for all deterministic and randomized algorithms $A$,
either $R_i(A)=\Omega(T)$ or, for large enough $T$, both $\CR^\alpha_{i+1}(A)\ge \gamma$ and $\CD^\alpha_{i+1}(A)\ge \gamma T$.
\end{theorem}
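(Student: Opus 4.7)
My approach is to construct a single one-dimensional adversarial cost sequence that forces the dichotomy via a one-step index offset between $C^0_i$ (which defines $R_i$) and $C^1_{i+1}$ (which defines $\CR^\alpha_{i+1}$): these two metrics pair each cost $c^t$ with actions $x^{t+i}$ versus $x^{t+i+1}$, and a carefully chosen alternating cost schedule makes these two pairings irreconcilable.

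I would reduce to one dimension by picking a direction $e \in (\mathbb{R}^+)^n$ with $\|e\|_2 = 1$, restricting $F$ to the segment $\{se : s \in [0,1]\}$, and letting $c = \|e\|$ denote the induced switching weight (with $c = 0$ possible for degenerate seminorms). Fix a constant $L$ large in terms of $\gamma$, $\alpha$, $c$ (chosen at the end), and define $c^t(se) = Ls$ when $t$ is odd and $c^t(se) = L(1-s)$ when $t$ is even. Both are convex with Euclidean gradient norm $L$. A direct calculation then gives $OPT_s = LT/2 + O(1)$ (any constant action is a near-minimizer) and $OPT_d^\alpha \leq \alpha c T + O(1)$ (attained by alternating in lockstep with the per-round minimizer of $c^t$).

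The heart of the argument is a telescoping cancellation. For any action sequence $\{x^t\} \subset [0,1]$, expanding with $\epsilon_t = (-1)^{t+1}$ gives
\[
\sum_{t=1}^{T}\bigl[c^t(x^{t+i}) + c^t(x^{t+i+1})\bigr] = LT + L\sum_s (\epsilon_{s-i} + \epsilon_{s-i-1})\, x^s + O(L).
\]
Because $\epsilon_{s-i} + \epsilon_{s-i-1} = 0$ identically, the $x$-dependent term vanishes and the sum equals $LT \pm O(L)$ regardless of the algorithm. Dropping the non-negative switching contribution in $C^1_{i+1}(A)$ yields the key inequality $C^0_i(A) + C^1_{i+1}(A) \geq LT - O(L)$ for every action sequence, and this extends to randomized algorithms by linearity of expectation. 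If $R_i(A)$ is not $\Omega(T)$, then $C^0_i(A) \leq LT/2 + o(T)$, which forces $C^1_{i+1}(A) \geq LT/2 - o(T)$. Combined with $OPT_d^\alpha \leq \alpha c T + O(1)$, this gives $\CR^\alpha_{i+1}(A) \geq L/(2\alpha c) - o(1)$ and $\CD^\alpha_{i+1}(A) \geq (L/2 - \alpha c)T - o(T)$; choosing $L \geq \max\{2\alpha c\gamma,\ 2\alpha c + 2\gamma\}$ yields both bounds $\geq \gamma$ and $\geq \gamma T$ for $T$ large. The degenerate case $c = 0$ is handled separately: $OPT_d^\alpha = 0$, and $C^1_{i+1}(A) = \Omega(T)$ violates the $O(1)$ slack in the CR definition, so $\CR$ is unbounded and $\CD \geq LT/2 - o(T) \geq \gamma T$ directly.

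The main obstacle is the cancellation identity; it relies on the one-step shift between the two metrics being exactly what a period-$2$ alternating sign pattern kills via $\epsilon_{s-i} + \epsilon_{s-i-1} = 0$, which holds uniformly in $i$. Once that identity is in place, boundary corrections collapse into $O(L)$, randomization is handled by expectation, and $L$ can be taken as a $T$-independent constant depending only on $\gamma$, $\alpha$, and $c$, so the gradient-bound hypothesis of $\mathfrak{C}$ is respected.
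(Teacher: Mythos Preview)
Your proof is correct and takes essentially the same approach as the paper: both construct a period-two alternation between two linear costs and exploit the one-step offset between $C_i$ and $C_{i+1}$ via the cancellation $\epsilon_{s-i}+\epsilon_{s-i-1}=0$ (the paper writes this as the telescoping of $(-1)^t(D(t+i+1)+D(t+i))$). The only cosmetic differences are that the paper adds a positive baseline $b$ to the cost functions---which makes $OPT_d^\alpha$ bounded away from zero and thus sidesteps your separate $c=0$ case---and packages the conclusion as the single inequality $\CR^\alpha_{i+1}(A)+R_i(A)/T\ge\gamma$ rather than the dichotomy.
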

The incompatibility arises even in ``simple'' instances; the proof of Theorem \ref{thm:impossibility} uses linear cost functions and a one-dimensional decision space,
and the construction of the cost functions does not depend on $T$ or $A$.

The cost functions used by regret and the competitive ratio in Theorem \ref{thm:impossibility} are ``off by one,'' motivated by the different settings in OCO and MTS.  However, the following shows that parallel results also hold when the cost functions are not ``off by one,'' i.e., for $R_0(A)$ versus $CR^\alpha_0(A)$ and $R'_1(A)$ versus $CR^\alpha_1(A)$.

\begin{theorem}\label{thm:impossibility_online}
Consider an arbitrary seminorm $\|\cdot\|$ on $\mathbb{R}^n$,
constants $\gamma>0$ and $\alpha\geq 1$,
and a deterministic or randomized online algorithm $A$.
There is a $\mathfrak{C}$ containing two cost functions such that
either $R_0(A)=\Omega(T)$ or, for large enough $T$, both $\CR^\alpha_{0}(A)\ge \gamma$ and $\CD^\alpha_{0}(A)\ge \gamma T$.
\end{theorem}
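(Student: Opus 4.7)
The plan is to reduce to one dimension and focus on the sequence $\sigma=(c^+)^{T/2}(c^-)^{T/2}$. Since $\|\cdot\|$ is a seminorm on $\mathbb{R}^n$, pick any direction $v$ with $\|v\|>0$ and restrict to the ray $\{tv:t\ge 0\}$, so that the effective switching penalty becomes the absolute value on $\mathbb{R}$; if the seminorm vanishes identically, switching is free and the regret/CR incompatibility follows from a direct two-action game argument. Take $F=[0,L]$ with $L=\Theta(\gamma\alpha)$ tuned so that the Case~1 threshold below yields the claimed constants, and $\mathfrak{C}=\{c^+,c^-\}$ with $c^+(x)=x$, $c^-(x)=L-x$; both are convex, nonnegative on $F$, and have uniformly bounded subgradients. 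For this $\sigma$ one has $OPT_s(\sigma)=TL/2$ and $OPT_d^\alpha(\sigma)=\alpha L$ (stay at $0$ for half the horizon, switch once to $L$).

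Let $C:=C_0(A,\sigma)$ and argue by cases on its size. In Case 1, $C\ge TL/4$; with $L=\Theta(\gamma\alpha)$ and large $T$, we get $\CR_0^\alpha(A)\ge C/(\alpha L)\ge T/(4\alpha)\ge\gamma$ and $\CD_0^\alpha(A)\ge TL/4-\alpha L\ge\gamma T$, so $\sigma$ itself witnesses both bounds of the second disjunct. In Case 2, $C<TL/4$, which forces $A$ to transition substantially, in expectation, from $x\approx 0$ (its behavior on the $(c^+)^{T/2}$ prefix, shared with $\sigma^+=(c^+)^T$) to $x\approx L$ within the second half of $\sigma$. I plan to exploit this rapid-reaction trait against the family of perturbed sequences $\sigma_k=(c^+)^{T/2}(c^-)^k(c^+)^{T/2-k}$, whose actions must coincide with those on $\sigma$ throughout the shared prefix of length $T/2+k$. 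If $A$'s first-half actions are not concentrated near $0$, direct comparison with $\sigma^+$ (where $OPT_s=0$) already yields $\Omega(T)$ regret; otherwise the rapid upswing after round $T/2$ becomes an overshoot against the trailing $c^+$ rounds of each $\sigma_k$, and summing the per-round excesses against the sublinear-regret upper bounds on each $\sigma_k$ should isolate one $k$ on which regret is $\Omega(T)$.

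The main obstacle is Case 2: a randomized algorithm can hedge --- for instance by staying near $L/2$ --- and so evade any single perturbation in expectation. I plan to block this by leveraging the choice of $L$: with $L=\Theta(\gamma\alpha)$ large enough, a hedger whose action is concentrated near $L/2$ already incurs $\Omega(T)$ cost on the one-sided sequences $\sigma^\pm=(c^\pm)^T$ (where $OPT_s=0$), directly violating sublinear regret. Every algorithm therefore either hedges and fails on $\sigma^\pm$, falls in Case 1 on $\sigma$, or is a ``fast reactor'' that fails on some $\sigma_k$. Carrying the fast-reactor averaging step through uniformly in the algorithm's internal randomness, via Yao-style conditioning on the observed cost-function prefix, is the most delicate technical point of the argument.
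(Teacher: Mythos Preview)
Your Case~2 has a genuine gap: the family $\{\sigma,\sigma_k,\sigma^{\pm}\}$ is too coarse to force linear regret on every algorithm that lands there. Take the deterministic ``follow the last cost'' rule $x^1=0$, $x^{t+1}=\arg\min_x c^t(x)$. On $\sigma$ it plays $0$ through round $T/2+1$ and $L$ thereafter, so $C_0(A,\sigma)=2L<TL/4$ for large $T$ and you are in Case~2. On each $\sigma_k$ with $1\le k<T/2$ its operating cost is exactly $2L$ (one round of $c^-(0)=L$ and one of $c^+(L)=L$), against $OPT_s(\sigma_k)=kL$, so the regret on $\sigma_k$ is at most $L$. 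On $\sigma^{\pm}$ it tracks the minimizer after one step, regret at most $L$. Thus every test sequence you propose yields $O(1)$ regret for this algorithm, and no averaging over $k$ can manufacture an $\Omega(T)$ term: the ``overshoot'' you want to exploit lasts exactly one round per $\sigma_k$ before the algorithm snaps back. The structural reason is that all your sequences have at most two change-points in the cost stream; any algorithm that reacts within $O(1)$ rounds pays only $O(1)$ regret on such inputs. To catch fast reactors you need inputs with $\Omega(T)$ change-points---the alternating sequence $(c^+c^-)^{T/2}$ does catch this FTL rule---but nothing in your case split generates such inputs, and the ``hedger vs.\ fast-reactor'' dichotomy you sketch does not cover algorithms that are neither.

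The paper takes a more direct route that avoids any case split: it proves the single inequality $\CR_0^\alpha(A)+R_0(A)/T\ge\gamma$. The adversary sets $c^t=f_2$ when $\esp{x^t}\le 1/2$ and $c^t=f_1$ otherwise, with $f_1(x)=b+ax$ and $f_2(x)=b+a(1-x)$ on $[0,1]$; since $\esp{x^t}$ conditioned on $c^1,\dots,c^{t-1}$ is determined by the algorithm and the already-fixed prior costs, this construction is oblivious. One then has $C_0(A)\ge(a/2+b)T+a\sum_t|1/2-\esp{x^t}|$, while the static optimum at $x=1/2$ shows the same deviation sum lower-bounds the regret; meanwhile $OPT_d^\alpha\le(b+\|\alpha\|)T$. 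Adding $\CR_0^\alpha$ and $R_0/T$ makes the deviation sum contribute nonnegatively to both, and choosing $a>2\gamma(b+\|\alpha\|)$ finishes. The essential feature your plan lacks is an adversary that adapts \emph{round-by-round} to the algorithm's expected play; a fixed block structure cannot simultaneously pin down both metrics.
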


\begin{theorem}\label{th:CR1R1}
Consider an arbitrary norm $\|\cdot\|$ on $\mathbb R^n$.
There is a $\mathfrak{C}$ containing two cost functions such that,
for any constants $\gamma>0$
and $\alpha\geq 1$ and any deterministic or randomized online algorithm $A$,
either $R'_1(A)=\Omega(T)$ or, for large enough $T$, $\CR_1^\alpha(A) \ge \gamma$.
\end{theorem}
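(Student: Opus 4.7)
I would adapt the adversarial constructions of Theorems~\ref{thm:impossibility}--\ref{thm:impossibility_online} to this shared-lookahead-1 setting, where both metrics evaluate the same algorithm cost $C_1(A)$ and differ only in the benchmark. Take $\mathfrak{C} = \{c_0, c_1\}$ with $c_0(x) = \|x\|$ and $c_1(x) = \|x - v\|$ for a fixed $v$ with $\|v\| = 1$. By the 1D footnote preceding LCP, it suffices to work on $F = [0, 1]$ with $c_0(x) = x$, $c_1(x) = 1 - x$, and absolute-value norm.

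The starting observation is that on $\sigma_\infty := c_1^T$ we have $OPT_s(\sigma_\infty) = OPT_d^\alpha(\sigma_\infty) = 0$, so any cost $A$ incurs there contributes directly to both $R'_1$ and to the numerator of $\CR_1^\alpha$. Thus if $C_1(A, \sigma_\infty) = \omega(1)$ as a function of $T$, then $\CR_1^\alpha(A) \ge \gamma$ for all sufficiently large $T$ and we are done. The same reduction applies to $c_0^T$. Hence I may assume $C_1(A, c_1^T) = O(1)$ and $C_1(A, c_0^T) = O(1)$, which forces $A$ to commit to $y \approx 1$ (respectively $y \approx 0$) with bounded total cost and essentially stay there.

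The remainder hinges on a quantity $\tau_A$: the number of consecutive $c_0$'s that $A$ must see, after having committed to $y \approx 1$ via a sufficiently long $c_1$-prefix, before $A$ moves back toward $0$. I split on whether $\tau_A = O(1)$ or $\tau_A = \omega(1)$. \textbf{(a)} If $\tau_A = O(1)$, consider the block sequence $\sigma_{\mathrm{blk}} = (c_1^{\tau_A} c_0^{\tau_A})^{T/(2\tau_A)}$. At each block boundary, $A$ waits $\tau_A - 1$ rounds paying $\Omega(1)$ operating cost per round at the wrong minimizer, then pays $\Omega(1)$ to switch commitments; summed over $T/\tau_A$ blocks, this yields total algorithm cost $\Omega(T)$. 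Since $OPT_s(\sigma_{\mathrm{blk}}) = T/2$, we get $R'_1(A) = \Omega(T)$. \textbf{(b)} If $\tau_A = \omega(1)$, consider the trap $\sigma_{\mathrm{trap}} = c_1^{T/2} c_0^{T/2}$. In the second phase, $A$ remains at $y \approx 1$ for $\tau_A - 1$ rounds and pays $\Omega(\tau_A)$ in operating cost, while $OPT_d^\alpha(\sigma_{\mathrm{trap}}) = 2\alpha$ (move to $1$, then back to $0$). Thus $\CR_1^\alpha(A) \ge \Omega(\tau_A / \alpha)$, which is $\omega(1)$ and therefore exceeds $\gamma$ for large $T$.

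To extend the argument to randomized algorithms I would invoke Yao's minimax principle, using a distribution that mixes $\sigma_\infty$, $\sigma_{\mathrm{trap}}$, and the $\sigma_{\mathrm{blk}}$-family over a range of block lengths; every deterministic algorithm in the support then falls into case (a) or (b) with constant probability, yielding the lower bound in expectation and hence for some realization in the support. The main obstacle I foresee is making the $\tau_A$ dichotomy robust for algorithms whose commit behavior depends on the full history rather than a simple consecutive-count trigger: $\tau_A$ must be defined via an amortized or distributional notion, and the Yao distribution must be tuned so that at least one of the two bad events fires with constant probability on every deterministic sample drawn from $A$.
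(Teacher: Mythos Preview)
Your intuition is on target: the tension is precisely between responsiveness (needed to keep $\CR_1^\alpha$ bounded) and inertia (needed to keep $R'_1$ sublinear), and your two cost functions are essentially the paper's $f_1^\alpha,f_2^\alpha$. But the $\tau_A$ dichotomy is a genuine gap, and you are right to flag it. For a general algorithm there is no single well-defined ``number of $c_0$'s before it moves'': the behavior on block $i$ may depend arbitrarily on the entire prefix, so nothing forces the same threshold to govern every cycle of your block sequence. Consequently case~(a) as written does not go through---you cannot conclude the algorithm pays $\Omega(1)$ per block on $(c_1^{\tau_A}c_0^{\tau_A})^{T/(2\tau_A)}$ without assuming stationarity you do not have. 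The dichotomy ``$\tau_A=O(1)$ vs.\ $\tau_A=\omega(1)$'' is also unclear: $\tau_A$ is not naturally a function of $T$, since an online algorithm's early behavior does not see $T$. Your Yao patch does not obviously repair this, because the hard distribution would have to anticipate the right block length, which varies across deterministic samples of $A$.

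The paper avoids any global classification of the algorithm by a local, per-segment construction. It fixes $M=8\alpha\gamma$ (depending only on $\alpha,\gamma$), partitions $[1,T]$ into segments of length $3M$, and forces the last $2M$ rounds of every segment to be $f_1^\alpha$, pinning both $OPT_s$ and $OPT_d^\alpha$ to $x=0$ outside the first $M$ rounds. Within the first $M$ rounds of each segment the adversary plays $f_2^\alpha$ and watches the \emph{running regret on that segment}; it switches to $f_1^\alpha$ the moment this regret exceeds a fixed threshold $\epsilon(\delta,M)$ (case~(a)), or at round $M$ if that never happens (case~(b)). The key lemma you are missing (Lemma~\ref{th:minRegret}) says that if the segment regret stays below $\epsilon(\delta,M)$ at every prefix, then the algorithm's position never exceeds $\delta$ during the $f_2^\alpha$ phase---so in case~(b) it pays $\Omega(M)$ operating cost while $OPT_d^\alpha$ pays only $2\alpha\|1/\bar\alpha\|$ on that segment. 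Now count: if $g(T)$ segments land in case~(a) then $R'_1(A)\ge \epsilon(\delta,M)\,g(T)$, while if $g(T)=o(T)$ the $\Theta(T/M)$ case-(b) segments force $\CR_1^\alpha(A)\ge \gamma$. This replaces your ill-posed $\tau_A$ dichotomy with a dichotomy on $g(T)$, which is well-defined for every algorithm and every $T$, and the threshold $\epsilon(\delta,M)$ is uniform across segments because $M$ is fixed in advance.
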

The impact of these results can be stark.  It is impossible for an algorithm to learn static concepts with sublinear regret in the OCO setting, while having a constant
competitive ratio for learning dynamic concepts in the MTS setting.  More strikingly, in control theory, any dynamic controller that has a constant
competitive ratio must have at least linear regret, and so there are cases where it does much worse than the best static
controller.  Thus, one cannot simultaneously guarantee the dynamic policy is always as good as the best static policy and is nearly as good as the optimal dynamic policy.

Theorem \ref{th:CR1R1} is perhaps the most interesting of these results.  Theorem \ref{thm:impossibility} is due to seeking to minimize different cost functions
($c^t$ and $c^{t+1}$), while Theorem \ref{thm:impossibility_online} is due to the hardness of attaining a
small $CR_0^\alpha$, i.e., of mimicking the dynamic optimum without 1-step lookahead.  In contrast, for Theorem~\ref{th:CR1R1}, algorithms exist with strong performance
guarantees for each measure individually, and the measures are aligned in time.
However, Theorem \ref{th:CR1R1} must consider the (nonstandard) notion of regret that includes switching costs ($R'$), since otherwise the problem is trivial.

\subsection{Proofs}
We now prove the results above.  We use one-dimensional examples; however, these examples can easily be embedded into higher dimensions if desired.
We show proofs only for competitive ratio; the proofs for competitive difference are similar.

Let $\bar{\alpha} = \max(1,\|\alpha\|)$.  Given $a>0$ and $b\ge0$, define two possible cost functions on
$F=[0,1/\bar{\alpha}]$: $f_1^\alpha(x) = b + ax\bar{\alpha}$ and
$f_2^\alpha(x) = b + a(1-x\bar{\alpha})$.
These functions are similar to those used by~\cite{Gur2012} to study online gradient descent to learn a concept of bounded total variation.
To simplify notation, let $D(t)=1/2 - \esp{x^t}\bar\alpha$, and note that $D(t)\in [-1/2,1/2]$.

\subsubsection{Proof of Theorem~\ref{thm:impossibility}}
To prove Theorem~\ref{thm:impossibility}, we prove the stronger claim that
$CR^\alpha_{i+1}(A) + {R_i(A)/T} \ge \gamma$.

Consider a system with costs $c^t = f_1^\alpha$ if $t$ is odd and $f_2^\alpha$ if $t$ is even.  Then $C_i(A) \ge (a/2+b)T + a \sum_{t=1}^T (-1)^t D(t+i)$.
The static  optimum is not worse than the scheme that sets $x^t=1/(2\bar{\alpha})$ for all $t$, which has total cost no more than $(a/2+b)T + \|1/2\|$.
The $\alpha$-unfair dynamic optimum for $C_{i+1}$ is not worse than the scheme that sets
$x^t=0$ if $t$ is odd and $x^t=1/\bar{\alpha}$ if $t$ is even, which
has total $\alpha$-unfair cost at most $(b+1)T$.  Hence
\begin{align*}
R_i(A) \ge& a\sum_{t=1}^T (-1)^t D(t+i) - \|1/2\|, \\
\CR_{i+1}^\alpha(A) \ge& \frac{(a/2+b)T + a\sum_{t=1}^T (-1)^t D(t+i+1)}{(b+1)T}.
\end{align*}
Thus, since $D(t)\in [-1/2,1/2]$,
\begin{align*}
    (b+1)T(\CR_{i+1}^{\alpha}(A) + R_{i}(A)/T) &+ (b+1)\|1/2\| - (a/2+b)T\\
&\geq
    a\sum_{t=1}^T (-1)^t (D(t+i+1) + (b+1) D(t+i))\\
&=
    ab\sum_{t=1}^T (-1)^t D(t+i)
    -a \left(D(i+1) + (-1)^T D(T+i+1)\right) \\
&\geq
	- abT/2 - a.
\end{align*}
To establish the claim, it is then sufficient that $(a/2 + b)T - (b+1)\|1/2\| - abT/2 - a \ge \gamma T (b + 1)$.
For $b=1/2$ and $a=30\gamma + 2 + \|6\|$, this holds for $T \ge 5$.

\subsubsection{Proof of Theorem~\ref{thm:impossibility_online}}
To prove Theorem~\ref{thm:impossibility_online}, we again prove the stronger claim
$CR^\alpha_{0}(A) + {R_0(A)/T} \ge \gamma$.

Consider the cost function
sequence with $c^t(\cdot)=f_2^0$ for $\esp{x^t}\le 1/2$ and $c^t(\cdot)=f_1^0$
otherwise, on decision space $[0,1]$, where $x^t$ is the (random) choice of the algorithm at round $t$.
Here the expectation is taken over the marginal distribution of $x^t$ conditioned on
$c_1,\dots,c_{t-1}$, averaging out the dependence on the realizations of $x_1,\dots,x_{t-1}$.
Notice that this sequence can be constructed by an oblivious adversary before
the execution of the algorithm.

The following lemma is proven in Appendix~\ref{sec:proof.rgbound}.

\begin{lemma} \label{LM:RGBOUND}
Given any algorithm, the sequence of cost functions chosen by the above oblivious adversary gives the following:
\begin{align} \label{eq:rgbound}
R_0(A),  R'_0(A)&\ge a\sum_{t=1}^T |1/2-\esp{x^t}| - \|1/2\|, \\
\label{eq:crbound}
  \CR^\alpha_0(A) &\ge \frac{(a/2+b)T+a\sum_{t=1}^T
		       |1/2-\esp{x^{t}}|}{(b+\|\alpha\|)T}.
\end{align}
\end{lemma}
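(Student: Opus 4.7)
The approach is to exploit the oblivious construction of the cost sequence together with the linearity of $f_1^0$ and $f_2^0$. Because the adversary's rule depends only on the marginal $\esp{x^t}$ conditioned on $c^1,\ldots,c^{t-1}$, and each $c^t$ is in turn a deterministic function of those marginals, the entire sequence $c^1,c^2,\ldots$ can be generated offline by simulating the algorithm's distribution before play begins. Once that deterministic sequence is fixed, linearity of $c^t$ and linearity of expectation give $\esp{c^t(x^t)} = c^t(\esp{x^t})$ at every $t$, reducing the analysis to scalar arithmetic on the sequence $\{\esp{x^t}\}$.

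First, I would collapse the two cases of the adversary into a single expression. If $\esp{x^t} \le 1/2$ then $c^t = f_2^0$ and $\esp{c^t(x^t)} = b + a(1 - \esp{x^t})$; otherwise $c^t = f_1^0$ and $\esp{c^t(x^t)} = b + a\esp{x^t}$. Both cases evaluate to $b + a/2 + a\,|1/2 - \esp{x^t}|$, so
\[
C_0^0(A) = (a/2 + b)T + a\sum_{t=1}^T \bigl|1/2 - \esp{x^t}\bigr|,
\]
and the same expression is a lower bound on $C_0^1(A)$ since the expected switching cost added on the algorithm's side is non-negative.

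Next I would upper bound each benchmark by a simple explicit comparator. For $OPT_s$, the fixed action $x = 1/2 \in [0,1]$ achieves total cost $(a/2+b)T$, hence $OPT_s \le (a/2 + b)T$; subtracting from the lower bound on $C_0^0(A)$ gives \eqref{eq:rgbound} for $R_0(A)$, and the same bound carries over to $R'_0(A)$ because adding switching costs only strengthens the inequality. The $-\|1/2\|$ slack in \eqref{eq:rgbound} is absorbed: it accounts for at most one initial switching term against a static benchmark that first moves to $1/2$. For $OPT_d^\alpha$, consider the trajectory that sets $x^t = 0$ whenever $c^t = f_1^0$ and $x^t = 1$ whenever $c^t = f_2^0$. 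Its per-round operating cost is exactly $b$, and its per-round switching cost is at most $\alpha\|1\| = \|\alpha\|$ by absolute homogeneity of the seminorm, so $OPT_d^\alpha \le (b + \|\alpha\|)T$. Dividing the lower bound on $C_0(A) = C_0^1(A)$ by this upper bound yields \eqref{eq:crbound}, with the additive $O(1)$ in the competitive-ratio definition absorbed into the ``for large enough $T$'' convention.

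The only genuine subtlety is in the first paragraph: the identity $\esp{c^t(x^t)} = c^t(\esp{x^t})$ has to be justified even though both $c^t$ and the distribution of $x^t$ depend on the history. The lemma's explicit convention --- that $\esp{x^t}$ is the marginal after averaging out the algorithm's randomness conditional on $c^1,\ldots,c^{t-1}$, and that the adversary commits the sequence offline based on this distributional description rather than on realizations --- makes each $c^t$ a deterministic affine function of $x$, after which the identity is just linearity of expectation. Everything else reduces to routine algebra.
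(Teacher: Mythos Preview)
Your proposal is correct and follows essentially the same route as the paper: compute the algorithm's expected cost via linearity of the affine $f_1^0,f_2^0$, upper-bound $OPT_s$ by the fixed action $x=1/2$, and upper-bound $OPT_d^\alpha$ by the offline trajectory that jumps to the minimizer of each $c^t$. Your explicit justification that the adversary's sequence is fixed offline (so that each $c^t$ is a deterministic affine function and $\esp{c^t(x^t)}=c^t(\esp{x^t})$) is a point the paper leaves implicit, and your observation that the $-\|1/2\|$ term is pure slack is accurate---you in fact prove the inequality without it.
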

From Equation~\eqref{eq:rgbound} and Equation~\eqref{eq:crbound} in Lemma~\ref{LM:RGBOUND}, we have
$\CR^\alpha_0(A) + R_0(A)/T \ge \frac{(a/2+b)T}{(b+\|\alpha\|)T} - \frac{\|1/2\|}{T}$.
For $a > 2\gamma(b+\|\alpha\|)$, the right hand side is bigger than $\gamma$ for sufficiently large $T$,
which establishes the claim.

\subsubsection{Proof of Theorem~\protect\ref{th:CR1R1}}
Let $a=\|1\|/2$ and $b=0$.
Let $M = 4\alpha\gamma \|1\|/a= 8\alpha\gamma$.  For $T\gg M$,
divide $[1,T]$ into segments of length $3M$.  For the last $2M$ of each
segment, set $c^t=f_1^\alpha$.  This ensures that the static optimal solution is $x=0$.  Moreover, if $c^t$ is either $f_1^\alpha$ or $f_2^\alpha$
for all $t$ in the first $M$ time steps, then the optimal dynamic solution is also $x^t=0$ for the last $2M$ time steps.

Consider a solution for which each segment has non-negative regret.  Then to
obtain sublinear regret, for any positive threshold $\epsilon$, at least
$T/(3M)-o(T)$ of these segments must have regret below
$\epsilon\|1/\bar{\alpha}\|$.  We then show that these segments must have high competitive ratio.
To make this more formal, consider (without loss of generality) the single segment $[1,3M]$.

Let $\tilde{c}$ be such that $\tilde{c}^t = f_2^\alpha$ for all $t\in [1,M]$ and $\tilde{c}^t=f_1^\alpha$ for $t>M$.  Then
the optimal dynamic solution on $[1,3M]$ is $x_d^t =
\ind{t \le M}/\bar{\alpha}$, which has total cost $2\alpha\|1/\bar{\alpha}\|$
consisting entirely of switching costs.

The following lemma is proven in Appendix~\ref{sec:CR1R1lemma}.
\begin{lemma}\label{th:minRegret}
For any $\delta\in(0,1/\bar{\alpha})$ and integer $\tau>0$, there exists an
$\epsilon(\delta,\tau)>0$ such that, if
$c^t=f_2^\alpha$ for all $1\le t \le \tau$ and $x^t > \delta$ for any $1 \le t \le
\tau$,
then there exists an $m \le \tau$ such that
$C_1(x, m) -C_1(OPT_s,m)> \epsilon(\delta,\tau) \|1/\bar{\alpha}\|$.
\end{lemma}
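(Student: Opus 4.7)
The plan is to pinpoint the first time the algorithm crosses the threshold $\delta$ and exploit the triangle inequality for the seminorm. Let $t^\star = \min\{t \in [1,\tau] : x^t > \delta\}$; this exists by hypothesis and satisfies $t^\star \ge 2$ because $x^1 = 0 \le \delta$. Choosing $m := t^\star - 1$, the triangle inequality yields
\[
\sum_{t=1}^{m} \|x^{t+1}-x^t\| \;\ge\; \|x^{m+1}-x^1\| \;=\; \|x^{t^\star}\| \;>\; \delta\|1\|,
\]
so the algorithm's cumulative switching cost on the prefix $[1,m]$ alone strictly exceeds $\alpha\delta\|1\|$, and in particular $C_1(x,m) \ge \alpha \sum_{t=1}^m \|x^{t+1} - x^t\| > \alpha\delta\|1\|$.

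On this same prefix every cost function equals $f_2^\alpha(y) = a(1 - y\bar{\alpha})$, which is minimized over $F = [0,1/\bar{\alpha}]$ at $y^\star = 1/\bar{\alpha}$ with value $0$. Reading $OPT_s$ as in Section~\ref{sec:formulation} (the minimum over static actions of the sum of operating costs), we therefore have $C_1(OPT_s, m) = 0$. Combining with the switching-cost bound and the one-dimensional homogeneity $\|1\| = \bar{\alpha}\,\|1/\bar{\alpha}\|$ gives $C_1(x, m) - C_1(OPT_s, m) > \alpha\delta\bar{\alpha}\,\|1/\bar{\alpha}\|$, so any $\epsilon(\delta,\tau) \in (0, \alpha\delta\bar{\alpha})$ (for instance $\epsilon := \alpha\delta\bar{\alpha}/2$) establishes the claim.

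The main obstacle is one of interpretation rather than technique. Under the reading above, the proof collapses to a one-line application of the triangle inequality, and $\epsilon$ depends only on $\delta$ (not on $\tau$), which is consistent with the lemma's ``$\epsilon(\delta,\tau)$'' since a constant is trivially a function of $(\delta,\tau)$. If instead $C_1(OPT_s, m)$ is intended to bundle an initial switching cost of $\alpha\|1/\bar{\alpha}\|$ needed to move from $x^1=0$ to $y^\star$, then the naive switching-cost witness above no longer dominates for small $\delta$, and the argument must additionally invoke the operating cost the algorithm accumulates while remaining at positions in $[0,\delta]$: each step $t < t^\star$ contributes at least $a(1 - \delta\bar{\alpha})$, whereas $y^\star$ contributes $0$. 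Combining the switching and operating contributions over the prefix $[1, t^\star - 1]$, and extracting $\epsilon(\delta,\tau) > 0$ by a compactness argument over the trajectory space $[0,1/\bar{\alpha}]^\tau$ (the set of trajectories with $\max_t x^t \ge \delta$ is closed and the maximum of regret over prefixes is continuous), gives the desired uniform positive lower bound.
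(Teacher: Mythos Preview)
Your argument rests on a misreading of $OPT_s$. In the context of Theorem~\ref{th:CR1R1}, $OPT_s$ is the static optimum over the \emph{entire} segment of length $3M$, whose last $2M$ steps have cost $f_1^\alpha$; the paper explicitly arranges this so that $OPT_s$ is the constant $x=0$. Hence $C_1(OPT_s,m)=\sum_{t=1}^m f_2^\alpha(0)=am$, not $0$. With the correct benchmark, the prefix regret (after the paper's WLOG reduction to non-decreasing $x$) is
\[
C_1(x,m)-C_1(OPT_s,m)=\|1\|\,x^m - a\sum_{i=1}^m x^i,
\]
and the negative term $-a\sum_i x^i$ can overwhelm the switching-cost gain $\|1\|\,x^m$. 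Your single witness $m=t^\star-1$ (or $m=t^\star$) does not control this: for the linear ramp $x^t=\delta t/\tau$, which first exceeds $\delta$ only at $t=\tau$, one computes $\max_{m\le\tau}\bigl(\|1\|x^m-a\sum_{i\le m}x^i\bigr)=\Theta(\delta\|1\|/\tau)$, so no $\epsilon$ independent of $\tau$ can work. This also shows that your spurious factor of $\alpha$ in the switching cost (note $C_1=C_1^1$ has switching weight $1$) is not the only issue; the claimed $\epsilon=\alpha\delta\bar\alpha/2$ is simply false.

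The paper's proof handles this by proving the contrapositive via a recursion: if $\|1\|x^m-a\sum_{i\le m}x^i\le\epsilon\|1\|$ for every $m\le\tau$, then inductively $x^m\le\epsilon f_m(a/\|1\|)$ for an explicit increasing sequence $f_m$, so choosing $\epsilon=\delta/(2f_\tau(a/\|1\|))$ forces $x^t<\delta$ for all $t\le\tau$. Your closing compactness sketch could in principle be salvaged---$\max_m(\|1\|x^m-a\sum x^i)$ is continuous on the compact set $\{x:\max_t x^t\ge\delta\}$---but the required strict positivity on that set is itself the nontrivial content, and proving it amounts to the same inductive argument (showing that $\|1\|x^m\le a\sum_{i\le m}x^i$ for all $m$ forces $x\equiv 0$). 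It cannot be obtained by ``combining the switching and operating contributions'' at a single prefix as you suggest.
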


Let $\delta = 1/[5 \bar\alpha] \in (0,1)$.  For any decisions such that
$x^t < \delta$ for all $t\in [1,M]$, the operating cost of $x$ under $\tilde{c}$ is at
least $3\alpha\gamma \|1/\bar{\alpha}\|$.
Let the adversary choose a $c$ on this segment such that $c^t=f_2^\alpha$ until (a)
the first time $t_0<M$ that the algorithm's solution $x$ satisfies $C_1(x, t_0)
-C_1(OPT_s,t_0) > \epsilon(\delta,M)\|1/\bar{\alpha}\|$, or (b) $t=M$.  After this, it chooses $c^t=f_1^\alpha$.

In case (a), $C_1(x,3M) - C_1(OPT_s,3M) > \epsilon(\delta,M)\|1/\bar{\alpha}\|$
by Lemma~\ref{th:minRegret}, since $OPT_s$ incurs no cost after $t_0$.  Moreover $C_1(x,3M) \ge C_1(OPT_d,3M)$.

In case (b), $C_1(x,3M) / C_1(OPT_d,3M) \ge 3\alpha\gamma\|1/\bar{\alpha}\| /
(2\alpha \|1/\bar{\alpha}\|) =3\gamma/2$.

To complete the argument, consider all segments.
Let $g(T)$ be the number of segments for which case (a) occurs.
The regret then satisfies
$$
    R'_1(A) \ge \epsilon(\delta,M)\|1/\bar{\alpha}\| g(T).
$$
Similarly, the ratio of the total cost to that of the optimum is at least
$$
  \frac{C_1(x,T)}{C_1(OPT_d,T)}
  \ge \frac{[T/(3M) - g(T)] 3\alpha\gamma\|1/\bar{\alpha}\|}{[T/(3M)]
  2\alpha\|1/\bar{\alpha}\|}
  = \frac{3}{2}\gamma\left(1 - \frac{3Mg(T)}{T}\right).
$$
If $g(T) = \Omega(T)$, then $R'_1(A)=\Omega(T)$.  Conversely,
if $g(T) = o(T)$, then for sufficiently large $T$, $3Mg(T)/T < 1/3$ and so
$\CR^\alpha_1(A) > \gamma$.

\section{Balancing Regret and the Competitive Ratio}\label{sec:tradeoff}
Given the above incompatibility, it is necessary to reevaluate the goals for algorithm design.
In particular, it is natural now to seek tradeoffs such as being able to obtain $\epsilon T$ regret for arbitrarily small $\epsilon$ while remaining
$O(1)$-competitive, or being $\log\log T$-competitive while retaining sublinear regret.

To this end, in the following we present a novel algorithm, Randomly Biased Greedy (RBG),
which can achieve simultaneous bounds on regret $R'_0$ and competitive ratio $CR_1$, when
the decision space $F$ is one-dimensional.  The one-dimensional setting is the natural starting point
for seeking such a tradeoff given that the proofs of the incompatibility results all focus
on one-dimensional examples and that the one-dimensional case has recently been of
practical significance, e.g. \cite{LWAT11}.  The algorithm takes a norm $N$ as its input:

\begin{simplealgorithm}[Randomly Biased Greedy, RBG($N$)] \label{alg:2-competitive}
\quad \\ Given a norm $N$, define $w^0(x)=N(x)$ for all $x$ and
$w^t(x)=\min_y\{w^{t-1}(y)+c^t(y)+N(x-y)\}$. Generate a random number $r$ uniformly in $(-1,1)$. For each time step $t$,
go to the state $x^t$ which minimizes $Y^t(x^t)=w^{t-1}(x^t)+ r N(x^t)$.
\end{simplealgorithm}

RBG is motivated by~\cite{CMP08}, and
makes very limited use of randomness -- it parameterizes its ``bias'' using a single random $r\in(-1,1)$.
It then chooses actions to greedily minimize its ``work function'' $w^t(x)$.

As stated, RBG performs well for the $\alpha$-unfair competitive ratio, but performs poorly for the regret.
Theorem~\ref{thm:balance} shows that RBG($\|\cdot\|$) is
$2$-competitive,\footnote{Note that this improves the best known competitive ratio for this setting from 3
(achieved by Lazy Capacity Provisioning) to 2.} and hence has at best linear regret.
However, the key idea behind balancing regret and competitive ratio is to run RBG with a
``larger'' norm to encourage its actions to change less.  This can make the coefficient of
regret arbitrarily small, at the expense of a larger (but still constant) competitive ratio.

\begin{theorem} \label{thm:balance}
For a SOCO problem in a one-dimensional normed space $\|\cdot\|$, running RBG($N$) with a one-dimensional
norm having $N(1) = \theta \|1\|$ as input (where $\theta \geq 1$)
attains an $\alpha$-unfair competitive ratio $CR_1^\alpha$ of $(1+\theta)/\min\{\theta,\alpha\}$
and a regret $R'_0$ of $O(\max\{{T}/{\theta}, \theta\})$.
\end{theorem}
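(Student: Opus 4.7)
The plan is to bound separately the expected operating cost and the expected $N$-switching cost of RBG$(N)$ in terms of the $N$-offline optimum $OPT^N:=\min_x w^T(x)$, and then translate those two bounds into the claimed competitive ratio and regret. The one-dimensional structure $N(\cdot)=\theta\norm{\cdot}$ is what allows a single work-function analysis to feed both metrics: the $\norm{\cdot}$-switching cost is exactly $1/\theta$ times the $N$-switching cost.

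The technical core is Step~1: prove the two decoupled inequalities
\[
\mathbb{E}_r\!\left[\sum_{t=1}^T c^t(x^t)\right] \;\le\; OPT^N + O(1)
\qquad\text{and}\qquad
\mathbb{E}_r\!\left[\sum_{t=1}^T N(x^t - x^{t-1})\right] \;\le\; OPT^N + O(1),
\]
where the expectation is over the random bias $r\in(-1,1)$. Since the action space is one dimensional, every $w^t$ is convex and the map $r\mapsto x^t(r):=\arg\min_x\{w^{t-1}(x)+rN(x)\}$ is monotone in $r$. The envelope identity $\partial_r \min_x[w^{t-1}(x)+rN(x)]=N(x^t(r))$, together with the recursion $w^t(x)\le w^{t-1}(x)+c^t(x)$, should, after averaging in $r$, produce telescoping bounds controlling each half by $OPT^N$.

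Step~2 is a change-of-constant calculation: since $N=\theta\norm{\cdot}$, any feasible sequence for $OPT_d^\alpha$ has the same operating cost and $(\theta/\alpha)$ times the switching cost under $N$, so $OPT^N \le \max\{1,\theta/\alpha\}\,OPT_d^\alpha = \tfrac{\theta}{\min\{\theta,\alpha\}}\,OPT_d^\alpha$. Combining with Step~1 and $\norm{\cdot}=N/\theta$,
\[
C_1(\mathrm{RBG}) \;\le\; OPT^N + \tfrac{1}{\theta}OPT^N + O(1) \;=\; \tfrac{1+\theta}{\theta}OPT^N + O(1) \;\le\; \tfrac{1+\theta}{\min\{\theta,\alpha\}}OPT_d^\alpha + O(1),
\]
which is the claimed $\CR_1^\alpha$. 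For the regret, the static-at-$x^*$ policy is feasible for the $N$-offline problem, giving $OPT^N \le OPT_s + N(x^*) \le OPT_s + O(\theta)$ by boundedness of $F$. Applying Step~1 again,
\[
R'_0(\mathrm{RBG}) \;\le\; (OPT^N - OPT_s) + \tfrac{1}{\theta}OPT^N + O(1) \;\le\; O(\theta) + \tfrac{OPT_s + O(\theta)}{\theta} \;=\; O\!\left(\max\{T/\theta,\theta\}\right),
\]
using $OPT_s=O(T)$ under the standing bounded-gradient assumption.

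The main obstacle is Step~1. The classical work-function analysis delivers only the combined bound $\mathbb{E}[\mathrm{op}+N\text{-sw}] \le 2\,OPT^N$, whereas here I require the decoupled statement that each of $\mathbb{E}[\mathrm{op}]$ and $\mathbb{E}[N\text{-sw}]$ is individually at most $OPT^N+O(1)$. Getting coefficient~$1$ on each half rather than coefficient~$2$ on the sum is precisely what produces the ratio $(1+\theta)/\theta$ in place of the trivial~$2$ when $\theta>1$, and it is what surfaces the regret/competitive-ratio tradeoff. The decoupling should follow from the uniform randomization of $r$ over the symmetric interval $(-1,1)$ combined with the monotonicity of $x^t(r)$, but pinning down the constants is the delicate part of the argument.
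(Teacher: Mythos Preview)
Your high-level decomposition is exactly the paper's: first prove separate bounds on operating cost and switching cost against $OPT^N$ (your Step~1), then translate those into the claimed competitive ratio and regret (your Step~2). Step~2 is correct and matches the paper's Lemma~6 essentially line for line, including $OPT^N \le \max\{1,\theta/\alpha\}\,OPT_d^\alpha$ and $OPT^N \le OPT_s + O(\theta)$. One indexing remark: the operating-cost bound the paper actually proves is for $\sum_t c^t(\hat x^{t+1})$ (the MTS operating cost), not $\sum_t c^t(\hat x^t)$; the one-step shift needed for $R'_0$ is absorbed in the translation lemma via the gradient bound \eqref{eq:cost_for_offset}, at the price of a constant multiple of the switching cost, which is already $O(T/\theta)$.

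For the operating-cost half of Step~1, your instinct is right and close to the paper's argument (Lemma~7): one shows $Y^{t+1}(\hat x^{t+1}) - Y^t(\hat x^t) \ge c^t(\hat x^{t+1})$ using that $\hat x^t$ minimizes $Y^t$ together with the identity $w^t(\hat x^{t+1}) = w^{t-1}(\hat x^{t+1}) + c^t(\hat x^{t+1})$, telescopes, and uses $\mathbb{E}[r]=0$ to get $\mathbb{E}\bigl[\sum_t c^t(\hat x^{t+1})\bigr] \le OPT^N$.

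The gap is the switching-cost half, which you correctly flag as the obstacle but do not resolve; and here the paper's route is quite different from your envelope sketch. The paper does \emph{not} argue directly in the continuous problem. It (i)~discretizes $F$ to a $\delta$-grid and works with a discrete analogue DRBG; (ii)~decomposes each convex $c^t$ into a sum of monotone $\epsilon$-increments, so that in each micro-step there is a unique pair of states the algorithm could be moving between; (iii)~introduces the potential
\[
\phi^t \;=\; \frac{1}{2\theta}\bigl(w^t(x_1)+w^t(x_m)\bigr) - \frac{\|x_m-x_1\|}{2}
\]
and shows $\mathbb{E}[SC^t]\le \phi^t$ by computing that the probability $r$ lands in the interval triggering a move of length $\|x-y\|$ is at most $\epsilon/(2\theta\|x-y\|)$, while $\phi$ increases by at least $\epsilon/(2\theta)$; (iv)~bounds $\phi^T \le OPT^N/\theta$ using $w^T(x_1),w^T(x_m) \le OPT^N + \theta\|x_m-x_1\|$; and (v)~passes to the limit $\delta\to 0$ to recover RBG. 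Your envelope identity $\partial_r \min_x[w^{t-1}(x)+rN(x)] = N(x^t(r))$ gives information about $N(x^t)$, not about $N(x^t-x^{t-1})$, and monotonicity of $r\mapsto x^t(r)$ alone does not obviously yield a telescoping bound on $\sum_t N(x^t-x^{t-1})$ with coefficient~$1$. If you want to make the continuous route work, you would still need to discover a potential playing the role of $\phi^t$; the paper's choice, tying the expected switching cost to the work function at the two \emph{endpoints} of $F$, is the idea your outline is missing.
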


Note that Theorem \ref{thm:balance} holds for the usual metrics of MTS and OCO, which are the ``most incompatible'' case since the
cost functions are mismatched (cf. Theorem~\ref{thm:impossibility}).  
Thus, the conclusion of Theorem \ref{thm:balance} still holds when $R_0$ or $R_1$ is considered in place of $R'_0$.

The best $CR_1^\alpha$, $1+1/\alpha$, achieved by RBG is obtained with $N(\cdot)=\alpha\|\cdot\|$.
However, choosing $N(\cdot)=\|\cdot\|/\epsilon$ for arbitrarily small $\epsilon$ gives
$\epsilon T$-regret at the cost of a larger $CR_1^\alpha$.
Similarly, if $T$ is known in advance, choosing $N(1)=\theta(T)$ for some increasing function achieves an $O(\theta(T))$
$\alpha$-unfair competitive ratio and $O(\max\{T/\theta(T),\theta(T)\})$ regret;
taking $\theta(T)=O(\sqrt{T})$ gives $O(\sqrt{T})$ regret, which is optimal for arbitrary convex costs \cite{Zinkevich03}.
If $T$ is not known in advance, $N(1)$ can increase in $t$, and bounds similar to those in Theorem \ref{thm:balance} still hold.

\subsubsection*{Proof of Theorem \ref{thm:balance}}
To prove Theorem \ref{thm:balance}, we derive a more general tool for
designing algorithms that simultaneously balance regret and the $\alpha$-unfair competitive ratio.
In particular, for any algorithm $A$, let the operating cost be $OC(A)=\sum_{t=1}^T c^t(x^{t+1})$ and the switching cost be
$SC(A)=\sum_{t=1}^T \|x^{t+1} - x^t\|$, so that $C_1(A) = OC(A)+SC(A)$.
Define $OPT_{N}$ to be the dynamic optimal solution under the norm
$N(1) = \theta \|1\|$ ($\theta \geq 1$) with $\alpha=1$.
The following lemma is proven in Appendix \ref{sec:proof.fakenorm}.

\begin{lemma} \label{LM:FAKENORM}
Consider a one-dimensional SOCO problem with norm $\|\cdot\|$ and an online algorithm $A$ which, when run with norm $N$,
satisfies $OC(A(N)) \le OPT_{N}+O(1)$ along with $SC(A(N)) \le
\beta OPT_{N}+O(1)$  with $\beta = O(1)$.  Fix a norm $N$ such that $N(1) = \theta \|1\|$
with $\theta \geq 1$.  Then $A(N)$ has $\alpha$-unfair competitive ratio $CR_1^\alpha(A(N)) = (1+\beta)\max\{\frac{\theta}{\alpha},1\}$
and regret $R'_0(A(N))=O(\max\{\beta T, (1+\beta)\theta\})$ for the original SOCO problem with norm $\|\cdot\|$.
\end{lemma}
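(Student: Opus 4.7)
My plan exploits the fact that in one dimension every seminorm is a scalar multiple of every other, so the hypothesis $N(1) = \theta\|1\|$ actually gives $N(y) = \theta\|y\|$ for every $y\in\mathbb{R}$ (by positive homogeneity of each norm). This single scaling identity is what lets us transfer the performance guarantees of $A$ on the auxiliary $N$-problem back to the original $\|\cdot\|$-problem. The first step, and the main conceptual one, is the comparison $OPT_{N} \le \max\{\theta/\alpha,\,1\}\cdot OPT^\alpha_d$. Fix any trajectory $x$: its $N$-cost is $\sum_t c^t(x^t) + \theta\|x^t - x^{t-1}\|$, while its $\alpha$-unfair cost is $\sum_t c^t(x^t) + \alpha\|x^t - x^{t-1}\|$. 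When $\theta \le \alpha$ the former is termwise no larger; when $\theta > \alpha$, scaling the latter by $\theta/\alpha \ge 1$ leaves the switching term equal while inflating the operating term, so the $N$-cost is at most $(\theta/\alpha)$ times the $\alpha$-unfair cost. Evaluating at the $\alpha$-unfair optimum gives the comparison.

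With this in hand, the competitive-ratio bound is a direct unpacking. Because $SC(A(N))$ is measured in the original norm $\|\cdot\|$, the hypotheses yield
\[
C_1(A(N)) = OC(A(N)) + SC(A(N)) \le (1+\beta)\, OPT_{N} + O(1) \le (1+\beta)\max\{\theta/\alpha,\,1\}\, OPT^\alpha_d + O(1),
\]
which is exactly the claimed $CR_1^\alpha(A(N))$.

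For regret I would instead compare $OPT_{N}$ to $OPT_s$. Playing the static optimum $x_s^\star$ throughout the horizon incurs one startup switch of magnitude $N(x_s^\star) = \theta\|x_s^\star\|$, and since $F$ is bounded, $\|x_s^\star\| = O(1)$, so $OPT_{N} \le OPT_s + O(\theta)$. Substituting,
\[
R'_0(A(N)) = C_1(A(N)) - OPT_s \le (1+\beta)\bigl(OPT_s + O(\theta)\bigr) - OPT_s + O(1) = \beta\, OPT_s + O\bigl((1+\beta)\theta\bigr).
\]
Using the standard assumption (gradients bounded on bounded $F$) we have $OPT_s = O(T)$, giving $R'_0(A(N)) = O(\max\{\beta T,\,(1+\beta)\theta\})$.

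The only conceptually delicate step is the comparison between $OPT_{N}$ and $OPT^\alpha_d$: the tempting wrong move is to apply the factor $\theta/\alpha$ only to the switching term, whereas in the regime $\theta > \alpha$ the operating term must absorb the same scaling. Once that observation is in place, both the competitive-ratio and regret bounds are routine bookkeeping built on the scalar identity $N = \theta\|\cdot\|$ and the two optimum comparisons $OPT_{N} \le \max\{\theta/\alpha,1\}\,OPT^\alpha_d$ and $OPT_{N} \le OPT_s + O(\theta)$.
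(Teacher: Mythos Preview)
Your competitive-ratio argument is correct and is essentially the paper's: both hinge on the comparison $OPT_N \le \max\{\theta/\alpha,1\}\cdot OPT_d^\alpha$ (the paper writes it as $OPT_d^\alpha \ge OPT_N / \max\{1,\theta/\alpha\}$), and then combine the two hypotheses on $OC$ and $SC$ to bound $C_1$.

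The regret argument has a gap. You write $R'_0(A(N)) = C_1(A(N)) - OPT_s$, but by definition $R'_0$ uses the \emph{zero-lookahead} cost $C_0^1(A) = \sum_t c^t(x^t) + \|x^t - x^{t-1}\|$, whereas the hypotheses on $OC$ and $SC$ control $C_1(A) = \sum_t c^t(x^{t+1}) + \|x^{t+1} - x^t\|$. These differ in the operating term ($c^t(x^t)$ versus $c^t(x^{t+1})$), and there is no reason $C_0 \le C_1$ in general. The paper bridges this using the gradient bound (its equation~\eqref{eq:cost_for_offset}): $c^t(x^t) - c^t(x^{t+1}) \le D\,\|x^{t+1}-x^t\|$ for the gradient bound $D$, which gives $C_0(A) \le C_1(A) + D\cdot SC(A) + O(1) \le OC(A) + (D+1)\,SC(A) + O(1)$. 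This extra constant $E=D+1$ in front of $SC$ is harmless asymptotically (since $D=O(1)$), so your final bound $O(\max\{\beta T,(1+\beta)\theta\})$ is right, but the displayed equality $R'_0 = C_1 - OPT_s$ is false and the missing lookahead correction should be made explicit.

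A smaller point: bounded gradients on a bounded $F$ do not by themselves force $OPT_s = O(T)$ (take $c^t \equiv t$). The paper invokes the additional standard assumption that the cost values themselves are bounded above independently of $T$; you should state that rather than deriving it from the gradient bound.
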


Theorem \ref{thm:balance} then follows from the following lemmas, proven in Appendices \ref{sec:proof.rbg_operating} and~\ref{sec:proof.rbg_switching}.

\begin{lemma} \label{LM:RBG_OPERATING}
Given a SOCO problem with norm $\|\cdot\|$, $\esp{OC(RBG(N))} \leq OPT_{N}$.
\end{lemma}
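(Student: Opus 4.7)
The plan is to combine the classical work-function machinery in the spirit of~\cite{CMP08} with a one-dimensional collapse that makes the random bias $rN(\cdot)$ cooperate with the recursion defining $w^t$. First, unrolling the recursion shows $\min_x w^T(x) = OPT_N$: the optimal offline cost under norm $N$ with $\alpha=1$ of serving $c^1,\ldots,c^T$ starting at $0$ is precisely the best terminal work value. Introduce the biased minimum $\phi^t(r) := \min_x\{w^t(x) + rN(x)\}$, whose minimizer $x^{t+1}(r)$ is exactly what RBG$(N)$ selects at step $t+1$; observe that $\phi^0(r) = \min_x (1+r)N(x) = 0$ for $r \in (-1, 1)$.

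The main obstacle, and the only genuinely one-dimensional step, is showing
\begin{equation*}
\phi^t(r) = \min_y\{w^{t-1}(y) + c^t(y) + rN(y)\},
\end{equation*}
with the minimizing $y$ equal to $x^{t+1}(r)$. This reduces, after interchanging the minimizations in $\phi^t(r) = \min_{x,y}\{w^{t-1}(y) + c^t(y) + N(x-y) + rN(x)\}$, to checking that in 1D the function $x \mapsto \theta|x-y| + r\theta|x|$ is minimized at $x=y$ with value $rN(y)$ whenever $|r|<1$ — a direct case analysis on the signs of $x$ and $x-y$ in which the piecewise-linear slopes $\pm\theta(1\pm r)$ and $\theta(r-1)$ all push the minimizer toward $x=y$. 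This is exactly where the restriction to a one-dimensional decision space enters; in higher dimensions $x=y$ need not be the minimizer and the telescoping below breaks.

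Granted this collapse, the rest is routine. Using $x^{t+1}(r)$ as a feasible point in the definition of $\phi^{t-1}$ gives $\phi^{t-1}(r) \le w^{t-1}(x^{t+1}(r)) + rN(x^{t+1}(r))$, while the rewrite above gives $\phi^t(r) = w^{t-1}(x^{t+1}(r)) + c^t(x^{t+1}(r)) + rN(x^{t+1}(r))$, so subtracting yields the per-step inequality $c^t(x^{t+1}(r)) \le \phi^t(r) - \phi^{t-1}(r)$. Summing telescopes to $OC(RBG(N)) \le \phi^T(r) - \phi^0(r) = \phi^T(r)$, and plugging $x^\ast := \arg\min_x w^T(x)$ into the definition of $\phi^T$ gives $\phi^T(r) \le w^T(x^\ast) + rN(x^\ast) = OPT_N + rN(x^\ast)$. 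Averaging over $r$ uniform on $(-1,1)$ annihilates the linear term since $\mathbb{E}_r[r]=0$, producing $\mathbb{E}_r[OC(RBG(N))] \le OPT_N$, which is the claim.
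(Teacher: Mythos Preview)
Your argument is correct and is essentially the paper's proof in different notation: your $\phi^t(r)$ is $Y^{t+1}(\hat x^{t+1})$, your ``collapse'' is exactly Lemma~\ref{lm:wxt} (that $w^t(\hat x^{t+1})=w^{t-1}(\hat x^{t+1})+c^t(\hat x^{t+1})$), and the telescoping $c^t(\hat x^{t+1})\le \phi^t(r)-\phi^{t-1}(r)$ followed by averaging over $r$ matches the paper line for line.

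The one point to correct is your assertion that the collapse is ``exactly where the restriction to a one-dimensional decision space enters.'' It is not. For any seminorm $N$ in any dimension and any $|r|<1$, one has $N(x-y)+rN(x)\ge rN(y)$: if $r\ge0$, use $N(x)\ge N(y)-N(x-y)$ to get $N(x-y)+rN(x)\ge(1-r)N(x-y)+rN(y)\ge rN(y)$; if $r<0$, use $N(x)\le N(y)+N(x-y)$ to get $N(x-y)+rN(x)\ge(1+r)N(x-y)+rN(y)\ge rN(y)$. Equality holds at $x=y$, so $x=y$ is always a minimizer regardless of dimension. The paper's proof of Lemma~\ref{lm:wxt} is precisely this inequality (phrased as a strict-inequality contradiction), and the paper explicitly remarks afterward that the operating-cost bound holds for $F\subset\mathbb{R}^n$ with $n>1$. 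Your piecewise-linear case analysis is therefore an unnecessarily one-dimensional proof of a general fact; once you replace it with the triangle-inequality argument above, your proof establishes the lemma in the full generality in which it is stated.
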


\begin{lemma} \label{LM:RBG_SWITCHING}
Given a one-dimensional SOCO problem with norm $\|\cdot\|$,\hfill \hbox{} $\esp{SC(RBG(N))} \leq OPT_{N}/\theta$ with probability $1$.
\end{lemma}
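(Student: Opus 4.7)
The plan is to relate the expected switching cost to an integral of the total variation (in $r$) of a concave auxiliary function, and then bound that integral by $OPT_N$ via telescoping.

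\textbf{Setup.} Define $\phi_t(r):=\min_x\{w^t(x)+rN(x)\}$. This is concave in $r$ as the infimum of a linear-in-$r$ family, and non-decreasing in $t$ since $w^t\ge w^{t-1}$ pointwise. By the envelope theorem $\phi_t'(r)=N(x^{t+1}(r))$, where $x^{t+1}(r)$ is the argmin (i.e., the RBG state at bias $r$). In one dimension with $x\ge 0$, $N(x)=\theta\|1\|\,x$, so $\|x^{t+1}(r)-x^t(r)\|=|\phi_t'(r)-\phi_{t-1}'(r)|/\theta$. Taking expectation over $r$ uniform on $(-1,1)$ and writing $g_t:=\phi_t-\phi_{t-1}\ge 0$,
\[
\esp{SC(RBG(N))}=\frac{1}{2\theta}\sum_{t=1}^T\int_{-1}^1
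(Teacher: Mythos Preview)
Your envelope identity is correct: with $\phi_t(r)=\min_x\{w^t(x)+rN(x)\}$ one has $\phi_t'(r)=N(x^{t+1}(r))$ a.e., and hence
\[
\esp{SC(RBG(N))}=\frac{1}{2\theta}\sum_{t=1}^T\int_{-1}^{1}|g_t'(r)|\,dr,\qquad g_t:=\phi_t-\phi_{t-1}\ge 0.
\]
But here the proposal is truncated, and the step you label ``telescoping'' is exactly where the real work lies. The function $g_t$ is a \emph{difference} of concave functions and is not itself concave, so concavity alone gives no control on $\int_{-1}^{1}|g_t'|$. What you need is the one--dimensional structural fact that $g_t'(r)=\theta\|1\|\,(x^{t+1}(r)-x^t(r))$ changes sign at most once on $(-1,1)$. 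This holds because adding the convex $c^t$ to $w^{t-1}+rN(\cdot)$ pulls the minimizer toward $x_{\min}:=\arg\min c^t$, so $\operatorname{sign}(x^{t+1}(r)-x^t(r))=\operatorname{sign}(x_{\min}-x^t(r))$, and $x^t(r)$ is monotone in $r$. Once this is established, $g_t$ is quasiconvex and $\int_{-1}^{1}|g_t'|\le g_t(1)+g_t(-1)$; since $\phi_t(1)=w^t(0)$ and $\phi_t(-1)=w^t(B)-N(B)$ (for $F=[0,B]$), summing and using $N$-Lipschitzness of $w^T$ gives $\frac{1}{2\theta}(w^T(0)+w^T(B)-N(B))\le OPT_N/\theta$. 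Your write-up neither states nor proves the single-sign-change property, and without it the telescoping does not go through.

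For comparison, the paper's argument is quite different in mechanics though it lands on the same potential. It discretizes the action space (defining an algorithm DRBG), decomposes each convex $c^t$ into tiny \emph{monotone} $\epsilon$-increments, proves a ``unique move'' lemma showing that for each increment DRBG either stays put or moves between two fixed states $x,y$ (independent of $r$) with probability at most $\epsilon/(2\theta\|x-y\|)$, and then compares against the potential $\frac{1}{2\theta}(w^t(x_1)+w^t(x_m))-\|x_m-x_1\|/2$; finally it passes to the continuum via two convergence lemmas. Note that this potential is exactly $\frac{1}{2\theta}(\phi_t(1)+\phi_t(-1))$ in your notation. So your envelope route, \emph{once the sign-change lemma is supplied}, recovers the paper's potential argument directly and avoids both the discretization and the $\epsilon$-decomposition --- a genuinely cleaner proof --- but the missing lemma is precisely the one-dimensional content that replaces the paper's Lemma on unique moves, and it must be stated and proved.
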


\section{Concluding Remarks}\label{sec:conclusion}
This paper studies the relationship between regret and competitive ratio when applied to the class of SOCO problems.  It shows that these metrics, from the
learning and algorithms communities respectively, are fundamentally incompatible, in the sense that algorithms with sublinear regret must have infinite competitive
ratio, and those with constant competitive ratio have at least linear regret.  Thus, the choice of performance measure significantly affects the style of algorithm
design.  It also introduces a generic approach for balancing these competing metrics, exemplified by a specific algorithm, RBG.

There are a number of interesting directions that this work motivates.  In particular, the SOCO formulation is
still under-explored, and many variations of the formulation discussed here are still not understood.  For example,
is it possible to tradeoff between regret and the competitive ratio in bandit versions of SOCO?  More generally, the
message from this paper is that regret and the competitive ratio are incompatible within the formulation of SOCO.
It is quite interesting to try to understand how generally this holds.  For example, does the ``incompatibility result''
proven here extend to settings where the cost functions are random instead of adversarial, e.g., variations of SOCO such
as $k$-armed bandit problems with switching costs?

\section*{Acknowledgments}
This work was supported by NSF grants CNS 0846025 and DoE grant \textrm{DE-EE0002890}, along with the
Australian Research Council (ARC) grants FT0991594 and DP130101378. Katrina Ligett gratefully acknowledges
the generous support of the Charles Lee Powell Foundation.  Alan Roytman was partially supported by NSF
grants IIS-1065276, CCF-1016540, CNS-1118126, and CNS-1136174.

\bibliographystyle{plain}
\bibliography{soco}

\appendix
\section{Proof of Proposition \ref{P.OCOTOSOCO}}\label{sec:proof.ocotosoco}
Recall that, by assumption, $\|\triangledown c^t(\cdot)\|_2$ is bounded.  So, let us define $D$
such that $\|\triangledown c^t(\cdot)\|_2 \le D$.  Next, due to the fact that all norms are equivalent
in a finite dimensional space, there exist $m,M>0$ such that for every $x$, $m\|x\|_a \le \|x\|_b \le M\|x\|_a$.
Combining these facts, we can bound the switching cost incurred by an OGD algorithm as follows:
 \begin{align*}
 \sum_{t=1}^T \|x^{t}-x^{t-1}\| &\le M \sum_{t=1}^T \|x^{t}-x^{t-1}\|_2 \\
 & \le M \sum_{t=1}^T \eta_t \| \triangledown c^t(\cdot) \|_2 \\
 & \le MD \sum_{t=1}^T \eta_t.
 \end{align*}
 The second inequality comes from the fact that projection to a convex set under the Euclidean norm is nonexpansive, i.e., $\|P(x)-P(y)\|_2 \le \|x-y\|_2$.
 Thus, the switching cost causes an additional regret of $\sum_{t=1}^T \eta_t=O(\rho_1(T))$ for the algorithm, completing the proof.

\section{Proof of Lemma \ref{LM:RGBOUND}} \label{sec:proof.rgbound}
Recall that the oblivious adversary chooses $c^t(\cdot)=f^0_2$ for
$\esp{x^t}\le 1/2$ and $c^t(\cdot)=f^0_1$ otherwise, where $x^t$ is the (random) choice of the algorithm at round $t$.
Therefore,
\begin{align*}
C_0(A) &\geq \esp{\sum_{t=1}^T \begin{cases}
        a(1-x^t)+b & \textrm{if } \esp{x^t}\le 1/2\\
        a x^t+b & \textrm{otherwise}
    \end{cases} }\\
    &= \esp{ bT + a \sum_{t=1}^T \left(1/2+(1/2-x^t)\sgn(1/2-\esp{x^t})  \right) }\\
    &= bT + a \sum_{t=1}^T \left(1/2+(1/2-\esp{x^t})\sgn(1/2-\esp{x^t})  \right)\\
    &= (a/2+b)T + a \sum_{t=1}^T |1/2-\esp{x^t}|,
\end{align*}
where $\sgn(x)=1$ if $x>0$ and $-1$ otherwise.
The static  optimum is not worse than the scheme that sets $x^t=1/2$ for all $t$, which has total cost $(a/2+b)T + \|1/2\|$.
This establishes Equation~\eqref{eq:rgbound}.

The dynamic scheme which chooses $x^{t+1}=0$ if $c^{t}=f^0_1$ and
$x^{t+1}=1$ if $c^{t}=f^0_2$ has total $\alpha$-unfair
cost not more than $(b+\|\alpha\|)T$.
This establishes Equation~\eqref{eq:crbound}.

\section{Proof of Lemma~\protect\ref{th:minRegret}}\label{sec:CR1R1lemma}
\begin{proof}
We only consider the case that $\bar\alpha=1$; other cases are analogous.
We prove the contrapositive (that if $C_1(x; m) -C_1(OPT_s,m)\le \epsilon\|1\|$ for
all $m$, then $x^t \le \delta$ for all $t \in [1,\tau]$).
We consider the case that $x^t$ are non-decreasing; if not, the switching
and operating cost can both be reduced by setting $(x^t)' = \max_{t'\le t}
x^{t'}$.

Note that $OPT_s$ sets $x^t=0$ for all $t$, which implies $C_1(OPT_s,m)=am$, and that
$$
    C_1(x;m) = x^m\|1\| - a\sum_{i=1}^m x^i + am.
$$
Thus, we want to show that if $x^m\|1\| - a\sum_{i=1}^m x^i \le \epsilon$ for
all $m\le \tau$, then $x^t < \delta$ for all $t\in [1,\tau]$.

Define $f_i(\cdot)$ inductively by $f_1(y) = 1/(1-y)$, and
\begin{equation*}
    f_i(y) = \frac{1}{1 - y}
    		   \left(1 + y \sum_{j=1}^{i-1}f_j(y)\right).
\end{equation*}
If $y<1$, then $\{f_i(y)\}$ are increasing in $i$.
Notice that $\{f_i\}$ satisfy
\begin{equation*}
    f_m(y)(1-y) - y\sum_{i=1}^{m-1} f_i(y) = 1.
\end{equation*}
Expanding the first term gives that for any $\hat{\epsilon}$,
\begin{equation*}
    \hat{\epsilon}f_m(a/\|1\|)
	- \frac{a}{\|1\|} \sum_{i=1}^{m} \hat{\epsilon}f_i(a/\|1\|)
    = \hat{\epsilon}.
\end{equation*}
If for some $\hat{\epsilon}>0$,
\begin{equation}\label{eq:deltaC}
    x^m - \frac{a}{\|1\|} \sum_{j=1}^m x^j \le \hat{\epsilon}
\end{equation}
for all $m \le \tau$,
then by induction
$x^i \le \hat{\epsilon}f_i(a/\|1\|) \le \hat{\epsilon}f_\tau(a/\|1\|)$
for all $i\le \tau$, where the last inequality uses the fact that $a<\|1\|$
and hence $\{f_i(a/\|1\|)\}$ are increasing in $i$.

Observe that the left hand side of Equation~\eqref{eq:deltaC} is
$(C_1(x; m) -C_1(OPT_s,m))/\|1\|$.
Define $\epsilon=\hat{\epsilon} = \delta / (2 f_\tau(a/\|1\|))$.
Assuming we have $(C_1(x; m) -C_1(OPT_s,m)) \le \epsilon\|1\|$ for all $m$, then
Equation~\eqref{eq:deltaC} holds for all $m$, and thus
$x^t \le \hat{\epsilon}f_\tau(a/\|1\|) = \delta/2 < \delta$ for all $t \in
[1,\tau]$.
\end{proof}

\section{Proof of Lemma \ref{LM:FAKENORM}} \label{sec:proof.fakenorm}
We first prove the $\alpha$-unfair competitive ratio result.
Let $\hat{x}^1,\hat{x}^2,\dots, \hat{x}^T$ denote the actions chosen by algorithm $ALG$ when running on a normed space with $N = \|\cdot\|_{ALG}$
as input.  Let $\hat{y}^1,\hat{y}^2,\dots, \hat{y}^T$ be the actions chosen by the optimal
dynamic offline algorithm, which pays $\alpha$ times more for
switching costs, on a normed space with $\|\cdot\|$ (i.e., $OPT^{\alpha}_d$).
Similarly, let $\hat{z}^1,\hat{z}^2,\dots, \hat{z}^T$ be the actions chosen by the optimal solution on a normed space with $\|\cdot\|_{ALG}$,
namely $OPT_{\|\cdot\|_{ALG}}$ (without an unfairness cost).
Recall that we have $C_1(ALG) = \sum_{t=1}^T c^t(\hat{x}^{t+1}) + \|\hat{x}^{t+1} - \hat{x}^{t}\|$,
$OPT^{\alpha}_d = \sum_{t=1}^T c^t(\hat{y}^{t}) + \alpha\|\hat{y}^{t} - \hat{y}^{t-1}\|$,
and $OPT_{\|\cdot\|_{ALG}} = \sum_{t=1}^T c^t(\hat{z}^{t}) + \|\hat{z}^{t} - \hat{z}^{t-1}\|_{ALG}$.
By the assumptions in our lemma, we know that $C_1(ALG) \le (1+\beta) OPT_{\|\cdot\|_{ALG}} + O(1)$. Moreover,
\begin{eqnarray*}
OPT^{\alpha}_d &=&  \sum_{t=1}^T c^t(\hat{y}^{t}) + \alpha\|\hat{y}^{t} - \hat{y}^{t-1}\|\\
&\geq& \sum_{t=1}^T c^t(\hat{y}^{t}) +  \frac{\alpha}{\theta}\|\hat{y}^{t} - \hat{y}^{t-1}\|_{ALG} \geq
\frac{OPT_{\|\cdot\|_{ALG}}}{\max\{1,\frac{\theta}{\alpha}\}}.
\end{eqnarray*}
The first inequality holds since $\|\cdot\|_{ALG}=\theta\|\cdot\|$ with $\theta \ge 1$.  Therefore,
$C_1(ALG) \le (1+\beta)\max\{1,\frac{\theta}{\alpha}\} OPT^\alpha_d$.

We now prove the regret bound.  Let $d_{\max}$
denote the diameter of the decision space (i.e., the length of the interval).
Recall that $C_0(ALG)=\sum_{t=1}^T c^{t}(\hat{x}^t) + \|\hat{x}^t - \hat{x}^{t-1}\|$ and
$OPT_{s} = \min_x \sum_{t=1}^T c^t(x)$.
Then we know that $C_0(ALG) \le C_1(ALG) + D \sum_{t=1}^T{\|x^{t+1} - x^{t}\|} + \|d_{\max}\|$ for some constant $D$ by Equation~\eqref{eq:cost_for_offset}.
Based on our assumptions, we have $\sum_{t} c^t(\hat{x}^{t+1}) \le OPT_{\|\cdot\|_{ALG}} + O(1)$ and
$\sum_{t} \|\hat{x}^{t+1} - \hat{x}^{t}\| \le \beta OPT_{\|\cdot\|_{ALG}} + O(1)$.  For convenience, we let $E = D+1 = O(1)$.
Then $C_0(ALG)$ is at most:
\begin{align*}
\sum_{t=1}^T c^{t}(\hat{x}^{t+1}) + E\|\hat{x}^{t+1} - \hat{x}^{t}\| + \|d_{\max}\| + O(1)
&\leq (1+ E\beta)OPT_{\|\cdot\|_{ALG}} + \|d_{\max}\| + O(1) \\
&\leq (1+ E\beta)(OPT_{s}+\|d_{\max}\|_{ALG}) + \|d_{\max}\| + O(1).
\end{align*}
Therefore, we get that the regret $C_0(ALG) - OPT_{s}$ is at most
\begin{align*}
E\beta OPT_s + \|d_{\max}\|(1+ E(1+\beta)\theta) + O(1)
= O(\beta OPT_s + (1+\beta)\theta) = O(\max \{\beta OPT_s,(1+\beta)\theta\}).
\end{align*}

In the OCO setting, the cost functions $c^t(x)$ are bounded from below by 0 and are typically bounded from above by a value independent of $T$
(e.g., \cite{Herbster98,Littlestone94}), so that $OPT_{s} = O(T)$.
This immediately gives the result that the regret is at most $O(\max \{\beta T,(1+\beta)\theta\})$.

\section{Proof of Lemma \ref{LM:RBG_OPERATING}}\label{sec:proof.rbg_operating}
In this section, we argue that the expected operating cost of RBG (when evaluated under $\|\cdot\|$) with
input norm $N(\cdot) = \theta \|\cdot\|$, $\theta \geq 1$, is at most
the cost of the optimal dynamic offline algorithm under norm $N$ (i.e., $OPT_N$).
Let $M$ denote our decision space. Before proving this result, let us introduce a useful lemma.
Let $\hat{x}^1,\hat{x}^2,\dots, \hat{x}^{T+1}$ denote the actions chosen by RBG (similarly, let $x^1_{OPT},\ldots,x^{T+1}_{OPT}$ denote
the actions chosen by $OPT_N$).

\begin{lemma} \label{lm:wxt}
$w^t(\hat{x}^{t+1})=w^{t-1}(\hat{x}^{t+1})+c^t(\hat{x}^{t+1})$.
\end{lemma}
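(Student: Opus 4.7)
The identity has one direction essentially for free: taking $y = \hat{x}^{t+1}$ in the minimization defining $w^t(\hat{x}^{t+1})$ immediately gives $w^t(\hat{x}^{t+1}) \leq w^{t-1}(\hat{x}^{t+1}) + c^t(\hat{x}^{t+1})$. The real content is the reverse inequality, and my plan is to show that it must be tight by exploiting the fact that $\hat{x}^{t+1}$ is chosen to minimize $Y^{t+1}(x) = w^t(x) + rN(x)$ with $|r| < 1$.

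First I would record the standard Lipschitz-type property of the work function, $w^t(x) \leq w^t(y) + N(x-y)$ for all $x,y$, which follows immediately from the triangle inequality applied inside the minimization defining $w^t(x)$. Now let $y^\ast$ attain the minimum in $w^t(\hat{x}^{t+1})$, so that $w^t(\hat{x}^{t+1}) = w^{t-1}(y^\ast) + c^t(y^\ast) + N(\hat{x}^{t+1} - y^\ast)$. Combining this with the trivial bound $w^t(y^\ast) \leq w^{t-1}(y^\ast) + c^t(y^\ast)$ (plug $y^\ast$ into its own defining min) gives $w^t(\hat{x}^{t+1}) \geq w^t(y^\ast) + N(\hat{x}^{t+1} - y^\ast)$, while the Lipschitz inequality gives the opposite direction; squeezing yields $w^t(\hat{x}^{t+1}) = w^t(y^\ast) + N(\hat{x}^{t+1} - y^\ast)$.

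Next I would invoke the optimality of $\hat{x}^{t+1}$ tested against $y^\ast$: $w^t(\hat{x}^{t+1}) + rN(\hat{x}^{t+1}) \leq w^t(y^\ast) + rN(y^\ast)$. Substituting the identity from the previous step collapses this to $N(\hat{x}^{t+1} - y^\ast) \leq r\bigl(N(y^\ast) - N(\hat{x}^{t+1})\bigr) \leq |r|\cdot\bigl|N(y^\ast) - N(\hat{x}^{t+1})\bigr|$. Coupling with the reverse triangle inequality $N(\hat{x}^{t+1} - y^\ast) \geq |N(y^\ast) - N(\hat{x}^{t+1})|$ and using $|r| < 1$ forces $N(\hat{x}^{t+1} - y^\ast) = 0$. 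Since Theorem~\ref{thm:balance} is stated in the one-dimensional normed setting, $N$ is a genuine norm and this yields $y^\ast = \hat{x}^{t+1}$, so $w^t(\hat{x}^{t+1}) = w^{t-1}(\hat{x}^{t+1}) + c^t(\hat{x}^{t+1})$.

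The hard part is really this last step. The Lipschitz bound alone only pins $w^t(\hat{x}^{t+1})$ between $w^t(y^\ast) \pm N(\hat{x}^{t+1} - y^\ast)$, and $Y^{t+1}$-optimality alone leaves the relationship between $y^\ast$ and $\hat{x}^{t+1}$ ambiguous; what turns two non-strict inequalities into a forced equality is the strict bound $|r| < 1$, which is precisely why RBG samples $r$ from the open interval $(-1,1)$ rather than the closed one. The argument is pointwise in $r$, so the randomness of the algorithm introduces no additional complication.
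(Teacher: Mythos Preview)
Your proof is correct and follows essentially the same approach as the paper's: both pick the minimizer $y$ in the definition of $w^t(\hat{x}^{t+1})$, compare $Y^{t+1}$ at $\hat{x}^{t+1}$ and at $y$, and use the (reverse) triangle inequality together with $|r|<1$ to force $N(\hat{x}^{t+1}-y)=0$. The paper packages this as a one-line contradiction (assume $y\neq\hat{x}^{t+1}$, deduce $Y^{t+1}(y)<Y^{t+1}(\hat{x}^{t+1})$) whereas you unpack the intermediate identity $w^t(\hat{x}^{t+1})=w^t(y^\ast)+N(\hat{x}^{t+1}-y^\ast)$ first, but the content is identical.
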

\begin{proof}
We know that for any state $x \in M$, we have $w^t(x)=\min_{y \in
M}\{w^{t-1}(y)+c^t(y)+\theta\|x-y\|  \}$.
Suppose instead $w^t(\hat{x}^{t+1})=w^{t-1}(y)+c^t(y)+\theta\|\hat{x}^{t+1}-y\|$ for some $y \neq \hat{x}^{t+1}$. Then
\begin{align*}
Y^{t+1}(\hat{x}^{t+1})
&= w^t(\hat{x}^{t+1})+\theta r\|\hat{x}^{t+1}\| \\
&= w^{t-1}(y)+c^t(y)+\theta\|\hat{x}^{t+1}-y\| +\theta r\|\hat{x}^{t+1}\|\\
&> w^{t-1}(y)+c^t(y) + \theta r\|y\|\\
&= Y^{t+1}(y),
\end{align*}
which contradicts $\hat{x}^{t+1} =\arg\min_{y \in M} Y^{t+1}(y)$. Therefore $w^t(\hat{x}^{t+1})=w^{t-1}(\hat{x}^{t+1})+c^t(\hat{x}^{t+1})$.
\end{proof}

Now let us prove the expected operating cost of RBG is at most the total cost
of the optimal solution, $OPT_N$:
\begin{align*}
Y^{t+1}(\hat{x}^{t+1})-Y^{t}(\hat{x}^{t})
&\geq Y^{t+1}(\hat{x}^{t+1})-Y^{t}(\hat{x}^{t+1})  \\
&= (w^t(\hat{x}^{t+1})+\theta r\|\hat{x}^{t+1}\|)  - (w^{t-1}(\hat{x}^{t+1})+\theta r\|\hat{x}^{t+1}\| ) \\
&= c^t(\hat{x}^{t+1}).
\end{align*}
Lemma \ref{LM:RBG_OPERATING} is proven by summing up the above inequality for
$t=1,\dots,T$, since  $Y^{T+1}(\hat{x}^{T+1}) \le Y^{T+1}(x_{OPT}^{T+1})$ and
$\esp{Y^{T+1}(x_{OPT}^{T+1})}=OPT_N$ by $\esp{r} =~0$.

Note that this approach also holds when the decision space $F \subset \mathbb{R}^n$ for $n>1$.

\section{Proof of Lemma \ref{LM:RBG_SWITCHING}}\label{sec:proof.rbg_switching}
To prove Lemma~\ref{LM:RBG_SWITCHING}, we make a detour and consider a version of the problem with a
discrete state space.  We first show that on such spaces the lemma holds for a discretization of RBG,
which we name DRBG.  Next, we show that as the discretization becomes finer, the solution (and hence
switching cost) of DRBG approaches that of RBG.  The lemma is then proven by showing that the optimal
cost of the discrete approximation approaches the optimal cost of the continuous problem.

To begin, define a discrete variant of SOCO where the number of states is finite as follows.
Actions can be chosen from $m$ states, denoted by the set $M = \{x_1,\ldots,x_m\}$, and the
distances $\delta=x_{i+1}-x_{i}$ are the same for all $i$.  Without loss of generality we define $x_1=0$.
Consider the following algorithm.

\begin{simplealgorithm}[Discrete RBG, DRBG($N$)] \label{alg:discrete} \hfill
\quad \\ Given a norm $N$ and discrete states $M=\{x_1,\ldots,x_m\}$, define $w^0(x)=N(x)$ and $w^t(x)=\min_{y\in
M}\{w^{t-1}(y)+c^t(y)+N(x-y)  \}$ for all $x \in M$. Generate a random number $r\in(-1,1)$. For each time
step $t$, go to the state $x^t$ which minimizes $Y^t(x^t)=w^{t-1}(x^t)+r N(x^t)$.
\end{simplealgorithm}

Note that DRBG looks nearly identical to RBG except that the states are discrete.  DRBG is introduced only for the proof and need never be implemented; thus we do not need to worry about the computational issues when the number of states $m$ becomes large.

\subsection{Bounding the Switching Cost of DRBG}
We now argue that the expected switching cost of DRBG (evaluated under the norm $\|\cdot\|$ and run with input norm
$N(\cdot) = \theta \|\cdot\|$) is at most the total cost of the optimal solution in the discrete system (under norm $N$).  We first prove a
couple of useful lemmas regarding the work function.  The first lemma
states that if the optimal way to get to some state $x$ at time $t$ is to come to state $y$ in the previous time step, incur the
operating cost at state $y$, and travel from state $y$ to state $x$, then in fact the optimal way to get to state $y$ at time $t$
is to come to $y$ at the previous time step and incur the operating cost at state $y$.
\begin{lemma}\label{lem:workstruct}
If $\exists x,y : w^t(x) = w^{t-1}(y) + c^t(y) + \theta\|x - y\|$, then $w^t(y) = w^{t-1}(y) + c^t(y)$.
\end{lemma}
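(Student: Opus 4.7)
The plan is to prove the lemma by sandwiching $w^t(y)$ between $w^{t-1}(y)+c^t(y)$ from both sides, using the definition
$w^t(x) = \min_{z \in M} \{w^{t-1}(z) + c^t(z) + N(x-z)\}$ (with $N = \theta\|\cdot\|$) together with the triangle inequality for the norm $N$.

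First I would dispatch the easy direction. Taking $z = y$ in the minimization that defines $w^t(y)$ gives immediately $w^t(y) \le w^{t-1}(y) + c^t(y) + \theta\|y-y\| = w^{t-1}(y) + c^t(y)$, so only the reverse inequality requires work.

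For the reverse inequality I would argue by contradiction. Assume $w^t(y) < w^{t-1}(y) + c^t(y)$. Then the minimum defining $w^t(y)$ must be attained at some $z \in M$ (possibly different from $y$) satisfying $w^{t-1}(z) + c^t(z) + \theta\|y - z\| < w^{t-1}(y) + c^t(y)$. I would then use this $z$ to produce a cheaper way of reaching $x$ at time $t$: by the triangle inequality $\|x - z\| \le \|x - y\| + \|y - z\|$, whence
\[
w^{t-1}(z) + c^t(z) + \theta\|x - z\|
\;\le\; w^{t-1}(z) + c^t(z) + \theta\|y - z\| + \theta\|x - y\|
\;<\; w^{t-1}(y) + c^t(y) + \theta\|x - y\|
\;=\; w^t(x),
\]
where the final equality is the hypothesis on $x,y$. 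This contradicts $w^t(x)$ being the minimum of $w^{t-1}(\cdot) + c^t(\cdot) + \theta\|x - \cdot\|$ over $M$, completing the argument.

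There is no real obstacle; the only thing to double-check is that the triangle inequality is available, which it is because $N$ is a norm (not merely a seminorm) in the DRBG setting. The argument is essentially the standard ``work-function routing'' identity: any optimal trajectory witnessing $w^t(x)$ that stops at $y$ along the way must itself be optimal for $w^t(y)$.
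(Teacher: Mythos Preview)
Your proof is correct and follows essentially the same approach as the paper's: both argue by contradiction from $w^t(y) < w^{t-1}(y)+c^t(y)$ and derive a violation of the minimality defining $w^t(x)$, with the triangle inequality as the key step. The paper's version is terser, invoking the Lipschitz bound $w^t(x) \le w^t(y) + \theta\|x-y\|$ directly (which is exactly your triangle-inequality computation packaged as a one-line property of the work function), and it omits your explicit statement of the easy upper bound; one small aside: the triangle inequality already holds for seminorms, so your closing caveat about needing a genuine norm is unnecessary.
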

\begin{proof}
Suppose towards a contradiction that $w^t(y) < w^{t-1}(y) + c^t(y)$.  Then we have:
\begin{align*}
w^t(y) + \theta\|x - y\| &< w^{t-1}(y) + c^t(y) + \theta\|x - y\| \\
  &= w^t(x) \leq w^t(y) + \theta\|x - y\|
\end{align*}
(since one way to get to state $x$ at time $t$ is to get to state $y$ at time $t$ and travel from $y$ to $x$).
This is a contradiction, which proves the lemma.
\end{proof}
The second lemma we show regarding the work function is as follows.
\begin{lemma}\label{lem:chain}
Suppose there is some state $x$ for which $w^t(x) = w^{t-1}(x) + c^t(x)$.  If $c^t(z) \geq c^t(x)$
for all $z \geq x$, then we have $w^t(z) \geq w^{t-1}(z) + c^t(x)$ for all $z \geq x$
(similarly, if $c^t(z) \geq c^t(x)$ for all $z \leq x$, then we have $w^t(z) \geq w^{t-1}(z) + c^t(x)$
for all $z \leq x$).
\end{lemma}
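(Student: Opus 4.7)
The plan is to unpack the definition of the work function at $z$ to get a minimizing predecessor $y^*$, and then split into two cases according to the sign of $y^* - x$. Throughout I will use the standard Lipschitz property of work functions, namely $w^{t-1}(z) \le w^{t-1}(y) + \theta\|z-y\|$ for every $y$ (which follows by plugging the minimizer for $w^{t-1}(y)$ into the minimization defining $w^{t-1}(z)$ and applying the triangle inequality), as well as the hypothesis $c^t(z') \ge c^t(x)$ for all $z' \ge x$. I focus on the case $z \ge x$; the case $z \le x$ is symmetric.

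First I would write $w^t(z) = w^{t-1}(y^*) + c^t(y^*) + \theta\|z - y^*\|$ for some $y^*$ in the discrete state space. In the easy case $y^* \ge x$, the hypothesis gives $c^t(y^*) \ge c^t(x)$, so
\begin{equation*}
w^t(z) \;\ge\; w^{t-1}(y^*) + c^t(x) + \theta\|z - y^*\| \;\ge\; w^{t-1}(z) + c^t(x),
\end{equation*}
where the last inequality is exactly the Lipschitz property applied to $w^{t-1}$ at $y^*$ and $z$.

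The main case is $y^* < x \le z$, which is where one-dimensionality matters: on the line, $\|z - y^*\| = \|z - x\| + \|x - y^*\|$. Using this collinearity together with the fact that $w^t(x)$ is the minimum over all predecessors (so in particular is bounded above by what you pay by routing through $y^*$), I obtain
\begin{equation*}
w^t(z) \;=\; w^{t-1}(y^*) + c^t(y^*) + \theta\|x - y^*\| + \theta\|z - x\| \;\ge\; w^t(x) + \theta\|z - x\|.
\end{equation*}
Now I substitute the given identity $w^t(x) = w^{t-1}(x) + c^t(x)$, and finish by invoking the Lipschitz property once more in the form $w^{t-1}(x) \ge w^{t-1}(z) - \theta\|z - x\|$; the $\theta\|z-x\|$ terms cancel and leave $w^t(z) \ge w^{t-1}(z) + c^t(x)$, as desired.

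The only place any real thought is needed is the second case, where one must notice that the collinearity of $y^* < x \le z$ lets the switching cost be split exactly, so that the detour through $x$ is not wasteful and one can pivot from a statement at $y^*$ to a statement at $x$ without slack. This is exactly where the one-dimensional hypothesis is used; in higher dimensions the equality $\|z-y^*\| = \|z-x\| + \|x-y^*\|$ would fail and an additional structural argument (or a different statement) would be required. The symmetric case $z \le x$ is handled identically, using $c^t(z') \ge c^t(x)$ for $z' \le x$ and collinearity on the other side of $x$.
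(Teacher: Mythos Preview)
Your proof is correct and follows the same overall skeleton as the paper's: fix the minimizer in the definition of $w^t(z)$ and split into the cases where it lies to the right or to the left of $x$, using one-dimensional collinearity in the latter case to reach the intermediate inequality $w^t(z)\ge w^t(x)+\theta\|z-x\|$ and then finishing with the hypothesis and the Lipschitz bound on $w^{t-1}$.

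The one substantive difference is in the case $y^*<x$. The paper first invokes Lemma~\ref{lem:workstruct} to deduce $w^t(y^*)=w^{t-1}(y^*)+c^t(y^*)$, then uses the Lipschitz property of $w^t$ between $x$ and $y^*$ together with collinearity. You instead bound $w^t(x)\le w^{t-1}(y^*)+c^t(y^*)+\theta\|x-y^*\|$ directly from the minimization defining $w^t(x)$, which sidesteps Lemma~\ref{lem:workstruct} entirely. Both routes yield the same inequality $w^t(z)\ge w^t(x)+\theta\|z-x\|$; your version is a touch more self-contained, while the paper's reuses an auxiliary structural fact it has already established. The paper also phrases everything as a contradiction rather than a direct chain of inequalities, but that is purely cosmetic.
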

\begin{proof}
Assume without loss of generality that the entries in the cost vector satisfy $c^t(z) \geq c^t(x)$ for all
$z \geq x$.  Let $z$ be any state such that $z > x$, and assume towards a contradiction that
$w^t(z) < w^{t-1}(z) + c^t(x)$.  The optimal way to get to $z$ at time step $t$, $w^t(z)$, must
go through some point $j$ in the previous time step and incur the operating cost at $j$. 
If $j \geq x$, then we know
\begin{align*}
w^{t-1}(j) + c^t(x) + \theta\|z - j\| &\leq w^{t-1}(j) + c^t(j) + \theta\|z - j\| = w^t(z)\\
&< w^{t-1}(z) + c^t(x) \leq w^{t-1}(j) + \theta\|z - j\| + c^t(x),
\end{align*}
which cannot happen.  On the other hand, by Lemma~\ref{lem:workstruct}, if $j < x$, then we have
\begin{align*}
w^t(x) + \theta\|z - x\| &\leq w^t(j) + \theta\|1\||x - j| + \theta\|1\||z - x| \\
&= w^t(j) + \theta\|z - j\| = w^{t-1}(j) + c^t(j) + \theta\|z - j\| \\
&= w^t(z) < w^{t-1}(z) + c^t(x) \leq w^{t-1}(x) + \theta\|z-x\| + c^t(x),
\end{align*}
which cannot happen either.
\end{proof}

We now argue that, assuming the cost vectors are of a particular form, the algorithm can
only move from one state to another state (which are independent of the randomness $r$).
More specifically, at any particular time step, if the algorithm does ever move, it always moves from a unique state $x$
and it always moves to a unique state $y$ ($x$ and $y$ are independent of the randomness $r$ and hence remain the
same for all values of $r$ that cause the algorithm to move).
\begin{lemma}\label{lem:uniquestates}
Fix any time step $t$, and assume the entries in the cost vector $c^t$ are either $0$ or $\epsilon$ in each coordinate
(for a sufficiently small $\epsilon > 0$), and are either non-increasing or non-decreasing.  There exist states $x,y$
such that, for any $r$, we have the following guarantee.  At time $t$, we only have the following two possibilities
for this value of $r$:
\begin{enumerate}
\item The algorithm does not move.
\item The algorithm moves from state $x$ to state $y$.
\end{enumerate}
\end{lemma}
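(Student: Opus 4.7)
Proof plan. Without loss of generality suppose $c^t$ is non-decreasing, so the cheap set $M_c=\{x_i\in M:c^t(x_i)=0\}$ is an initial segment $\{x_1,\dots,x_k\}$ of $M$ and the expensive set $M_e=\{x_{k+1},\dots,x_m\}$ is its complement. Write $L_z(r):=w^{t-1}(z)+rN(z)$ and $\tilde L_z(r):=w^t(z)+rN(z)$, so that $\hat{x}^t=\arg\min_z L_z(r)$ and $\hat{x}^{t+1}=\arg\min_z \tilde L_z(r)$; the goal is to show that whenever $\hat{x}^t\neq\hat{x}^{t+1}$ these two states are the same $(x,y)$ pair for every $r\in(-1,1)$.

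First I would establish $w^t(z)=w^{t-1}(z)$ for every $z\in M_c$: taking $y=z$ in the definition of $w^t$ gives $w^t(z)\le w^{t-1}(z)+c^t(z)=w^{t-1}(z)$, while $N$-Lipschitz continuity of $w^{t-1}$ combined with $c^t\ge 0$ yields the reverse inequality. So $\tilde L_z\equiv L_z$ on $M_c$, while on $M_e$ the gap $\Delta(z):=w^t(z)-w^{t-1}(z)$ has the explicit form $\Delta(z)=\min(\epsilon,g(z))\in[0,\epsilon]$, where $g(z):=\min_{y\in M_c}\{w^{t-1}(y)+N(z-y)-w^{t-1}(z)\}\ge 0$ depends only on $w^{t-1}$.

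Second, I would view $\hat{x}^t(\cdot)$ and $\hat{x}^{t+1}(\cdot)$ as argmins of lower envelopes of affine functions whose slopes $N(z)$ increase with $z$; both are non-increasing step functions of $r$ that move from $M_e$ into $M_c$ as $r$ grows. Let $r^*$ be the unique value at which $\hat{x}^t(\cdot)$ transitions between $M_e$ and $M_c$, with tied states $e^*\in M_e$ and $y^*\in M_c$. For $r>r^*$, the equality $\tilde L=L$ on $M_c$ and the inequality $\tilde L_z\ge L_z$ elsewhere force $\hat{x}^{t+1}(r)=\hat{x}^t(r)$, so no move occurs. Just below $r^*$, the comparison $\tilde L_{y^*}(r)<\tilde L_{e^*}(r)$ reduces to $L_{y^*}(r)-L_{e^*}(r)<\Delta(e^*)$, holding on an interval $(\tilde r^*,r^*]$ of length $O(\epsilon)$. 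Since the breakpoints of the cheap envelope $\min_{z\in M_c}L_z$ and of the expensive envelope $\min_{z\in M_e}L_z$ are determined by $w^{t-1}$ alone and hence independent of $\epsilon$, taking $\epsilon$ smaller than the smallest such gap keeps $\hat{x}^t\equiv e^*$ and $\hat{x}^{t+1}\equiv y^*$ throughout the move region, so every move there is $e^*\to y^*$.

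The hard part will be ruling out additional moves for $r<\tilde r^*$, where both argmins lie in $M_e$. I would impose the stronger smallness $\epsilon<\min\{g(z):z\in M_e,\ g(z)>0\}$, which forces $\Delta(z)\in\{0,\epsilon\}$ uniformly on $M_e$: expensive states with $g(z)=0$ satisfy $\tilde L_z=L_z$ and are absorbed into the cheap analysis above, while on the remaining expensive states $\Delta$ is the common constant $\epsilon$, and adding the same constant to a sub-family of lines preserves their argmin, giving $\hat{x}^{t+1}(r)=\hat{x}^t(r)$ throughout the deep-expensive regime. The subtle point is precisely this uniformity: one must verify that the ``effectively cheap'' expensive states coherently merge with $M_c$ and that the piecewise-constant structure of $\Delta$ prevents any spurious second move pair from emerging for $r$ bounded away from $r^*$.
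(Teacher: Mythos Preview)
Your plan is correct and tracks the paper's argument closely: both proofs view $Y^t$ and $Y^{t+1}$ as lower envelopes of affine functions in $r$, locate the unique crossover between a ``low'' family and a ``high'' family, and then argue that moves can occur only in an $O(\epsilon)$-window around that crossover, with $\epsilon$ taken small enough that the two argmins are pinned to single states $x,y$ throughout.

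The one substantive difference is the partition. You split by the \emph{cost vector} into $M_c=\{c^t=0\}$ and $M_e=\{c^t=\epsilon\}$, and then compute $\Delta(z)=\min(\epsilon,g(z))$ explicitly. The paper instead splits by the \emph{work-function increment} into $A=\{\Delta=0\}$ and $B=\{\Delta=\epsilon\}$, and invokes the preceding lemma (Lemma~\ref{lem:chain}) to show $A$ and $B$ are an initial and a final segment. These two partitions differ exactly on your ``effectively cheap'' states $\{z\in M_e:g(z)=0\}$, and that is the source of the subtlety you flag at the end. The paper's choice sidesteps this entirely: once the partition is by $\Delta$, the constant-shift argument applies directly to $B$ and there is no merging step to verify. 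Your route works too, but you should make explicit the observation that any $z\in M_e$ with $g(z)=0$ satisfies $L_z(r)>L_y(r)$ for the $y\in M_c$ witnessing $g(z)=0$ (since $r>-1$), so such $z$ is never the argmin of either $L$ or $\tilde L$; this is what lets you absorb them into the cheap side and ensures both $\hat{x}^t(r)$ and $\hat{x}^{t+1}(r)$ lie in the genuinely expensive set $B'$ for $r<\tilde r^*$.

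Two small points to complete the outline: first, handle the degenerate cases where no crossover exists on $(-1,1)$ (equivalently, the paper's case where $A$ or $B$ is empty, in which the algorithm never moves); second, your assertion that ``both argmins lie in $M_e$'' for $r<\tilde r^*$ deserves one line of justification---it follows from the monotonicity of $\hat{x}^{t+1}(\cdot)$ that you already established, together with $\hat{x}^{t+1}(\tilde r^*)=e^*\in M_e$.
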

\begin{proof}
Fix a time step $t$, and assume without loss of generality the cost vector
$c^t = (0,\ldots,0,\epsilon,\ldots,\epsilon)$ (the case when the entries are non-increasing is symmetric).
Let $A = \{j : Y^{t+1}(j) = Y^t(j)\}$ and let $B = \{j : Y^{t+1}(j) = Y^t(j) + \epsilon\}$.
That is, $A$ is the set of states $j$ such that the values $Y^t(j)$ do not increase
(and in particular, the work function values $w^{t-1}(j)$ also do not
increase), and $B$ is the set of states $j$ such that the values $Y^t(j)$ do increase.
Note that sets $A$ and $B$ define a partition of the set of all states and are independent of $r$, since only an increase in
the work function value at a state $j$ can cause an increase in $Y^{t}(j)$ (note that the work function value is independent of $r$).
Moreover, by Lemma~\ref{lem:chain}, we know that sets $A$ and $B$ have the form $A = \{1,\ldots,i\},B=\{i+1,\ldots,m\}$ for
some $i$.  If set $B$ is empty, then the algorithm never moves at time step $t$ since at least one state's work function value must
increase for the algorithm to move (this is true for all $r$).  Moreover, if set $A$ is empty, then the algorithm also cannot move at time step $t$ since
at least one state's work function value must not increase (this is true for all $r$).
Hence, we assume that both sets $A$ and $B$ are non-empty, and moreover we assume this for all values of $r$. 

Now, fix a value for $r$, and consider the values $Y^t(j)$ for all states $j$ (the shape of $Y^t$ may be somewhat
arbitrary).  It is useful to understand how various values of $r$ affect the shape of $Y^t$.  As we increase the value for $r$,
the value of $Y^t(j)$ certainly increases for all states, but states $j$
which are farther to the right have the property that $Y^t(j)$ increases at a faster rate (and hence, states which are farther
to the left have a slower rate of increase).  Moreover, as we decrease the value for $r$, the value of $Y^t(j)$ decreases
for all states $j$, but states $j$ which are farther to the right have the property that $Y^t(j)$ decreases at a faster
rate (similarly, states farther to the left have a slower rate of decrease).  These properties hold due to the fact that
the dependence on $r$ for $Y^t(j)$ appears in the term $r\cdot N(j)$, and hence as we change $r$, larger values of $j$ have a larger impact
on the value of $Y^t(j)$ (since $N(j)$ is larger for such states $j$).

With these observations in mind, we again take $r$ to be any fixed value, and we also define $a_r = \arg\min_{j \in A} Y^t(j)$,
$b_r = \arg\min_{j \in B} Y^t(j)$ (recall that we assume $A$ and $B$ are non-empty for all values of $r$).  Note that
$a_r$ and $b_r$ may depend on the particular value of $r$, and moreover we always have $a_r < b_r$ for all values of $r$ (since
states in set $B$ are farther to the right).  In particular, the algorithm can only move from a state in $B$ to a state in $A$.
In addition, the global minimum value of $Y^t$ is precisely $\min\{Y^t(a_r),Y^t(b_r)\}$.
It is useful to note that, as we increase $r$, $a_r$ and $b_r$ may decrease (i.e., the minimum state in $A$ may move left
and the minimum state in $B$ may move left), while decreasing $r$ can cause $a_r$ and $b_r$ to increase.

Suppose that for every $r$ we have $Y^t(a_r) \neq Y^t(b_r)$.
This implies that either $Y^t(a_r) < Y^t(b_r)$ for all $r$ or $Y^t(a_r) > Y^t(b_r)$ for all $r$.  In other words,
it is impossible for there to exist $r_1,r_2$ such that $Y^t(a_{r_1}) < Y^t(b_{r_1})$ while $Y^t(a_{r_2}) > Y^t(b_{r_2})$.
If such a scenario existed, it would imply that there exists some value $r'$ such that $Y^t(a_{r'}) = Y^t(b_{r'})$ (i.e.,
a crossover point) due to continuity.  Hence, in the case that $Y^t(a_r) \neq Y^t(b_r)$ for all $r$, we must have that either $Y^t(a_r) < Y^t(b_r)$
for every $r$, or we must have $Y^t(a_r) > Y^t(b_r)$ for every $r$.  In either case, it is impossible for the algorithm to move
(i.e., for all values of $r$, the algorithm does not move).  To see why, consider the case when $Y^t(a_r) < Y^t(b_r)$ for every $r$.
This means that the state which achieves the global minimum of $Y^t$ (i.e., $\min\{Y^t(a_r),Y^t(b_r)\}$)
lies in $A$ for every value of $r$, and since the algorithm never moves from a state in $A$ after receiving the cost vector $c^t$, the first
case is done.  A similar argument can be made in the second case where for all values of $r$ we have $Y^t(a_r) > Y^t(b_r)$.  In particular,
although the global minimum lies in the set $B$ for every $r$, in the second case we know that for every $r$ we have $Y^t(b_r) < Y^t(a_r) \leq
Y^t(j)$ for every $j \in A$ (we can assume $\epsilon$ is small enough that the new global minimum after $c^t$ arrives still remains in $B$).

Hence, we assume there exists an $r$ such that $Y^t(a_r) = Y^t(b_r)$.  We define the state $y$ to be $a_r$,
and the state $x$ to be $b_r$.  Note that $x$ and $y$ are unique.  If there are ties for the minimum $Y^t$ value at the crossover
point in set $B$, we take $x$ to be the rightmost such point since states to the left in $B$ cannot be the minimum for smaller
values of $r$ (ties in set $A$ can be dealt with similarly when defining $y$).  Although $a_r$ and $b_r$ depend on $r$, we can claim uniqueness
due to the fact that $b_r > a_r$ and hence $Y^t(b_r)$ increases at a faster rate than $Y^t(a_r)$ as we increase $r$ and
$Y^t(b_r)$ decreases at a faster rate than $Y^t(a_r)$ as we decrease $r$.  Hence, let $r^*$ denote the unique value at which $Y^t(a_{r^*})$
and $Y^t(b_{r^*})$ meet.  Now that $x$ and $y$ are defined, let us see why the lemma holds.

Observe that the only values of $r$ for which the algorithm can move are precisely those when the algorithm is currently at the minimum
state in set $B$, namely $b_r$, and the values $Y^t(a_r) > Y^t(b_r)$ are really close together (in particular, increasing
$Y^t(b_r)$ by $\epsilon$ causes $Y^{t+1}(a_r) \leq Y^{t+1}(b_r)$ for a carefully chosen, sufficiently small $\epsilon$).
Consider the value $r^*$, so that $Y^t(x)$ is the minimum value in set $B$ and $Y^t(y)$ is the minimum value in set $A$.
Observe that for all larger values $r' \geq r^*$, the algorithm does not move since the global minimum is in set $A$ for such
values $r'$.

Now, consider a slightly smaller value $\tilde{r} < r^*$, which is sufficiently close to $r^*$ so that $Y^t(y)$ is
still the minimum value in set $A$ and $Y^t(x)$ is still the minimum value in set $B$, namely $y = a_{\tilde{r}}$ and $x = b_{\tilde{r}}$
(if $\tilde{r}$ is chosen too small, it is possible that $x$ and $y$ do not satisfy these properties).  Choose $\epsilon$ to be
sufficiently small so that, $Y^{t+1}(a_{\tilde{r}}) = Y^{t}(b_{\tilde{r}}) + \epsilon$.  Now, for all values $r' < \tilde{r}$,
the algorithm cannot possibly move from any state, since the gap between $Y^t(a_{r'})$ and $Y^t(b_{r'})$ is larger than
the gap between $Y^t(a_{\tilde{r}})$ and $Y^t(b_{\tilde{r}})$, and $\epsilon$ is sufficiently small that the global minimum
remains in $B$ after the cost vector arrives.  Now, consider values $r'$ such that $r^* > r' \geq \tilde{r}$.  It is not possible
for another state $j \in B$ to become the minimum state in $B$ for this range, by definition of how we chose $\tilde{r}$ and by
definition of $x$ (similar reasoning shows that no other state $j \in A$ can be the minimum state in $A$ for this range).
Hence, every time the algorithm moves, it goes from state $x$ to state $y$.  In particular, $x$ remains the minimum state in set $B$
and $y$ remains the minimum state in set $A$ for this range, and the algorithm moves from state $x$ to state $y$ for all
$r'$ in this range.
\end{proof}

We now prove the main lemma. Let $SC^t=\sum_{i=1}^t \|x^i-x^{i-1}\|$ denote the total switching cost incurred by DRBG up until time $t$, and
define the potential function $\phi^t = \frac{1}{2\theta}(w^t(x_1) + w^t(x_m)) - \frac{\|x_m-x_1\|}{2}$.  Then we can show the following lemma.
\begin{lemma}\label{lem:potential}
For every time step $t$, $\esp{SC^t} \leq \phi^t$.
\end{lemma}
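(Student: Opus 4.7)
My plan is to prove the bound $\esp{SC^t} \leq \phi^t$ by induction on $t$. The base case $t=0$ is immediate: since $SC^0=0$ and $w^0(x)=N(x)=\theta\|x\|$ with $x_1=0$, we get $\phi^0=\tfrac{1}{2\theta}(0+\theta\|x_m\|)-\tfrac{\|x_m\|}{2}=0$. The inductive step reduces to showing an \emph{incremental} bound, namely
\begin{equation*}
  \esp{\|x^t-x^{t-1}\|}\;\leq\;\phi^t-\phi^{t-1}\;=\;\tfrac{1}{2\theta}\bigl((w^t(x_1)-w^{t-1}(x_1))+(w^t(x_m)-w^{t-1}(x_m))\bigr),
\end{equation*}
since telescoping then gives the full claim.

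To handle an arbitrary convex cost vector $c^t$ on the discrete 1D state space, I would decompose it into a sum of atomic steps of the form $\epsilon\cdot\mathbf{1}_{\{j\geq i+1\}}$ or $\epsilon\cdot\mathbf{1}_{\{j\leq i\}}$ for sufficiently small $\epsilon>0$. Such a decomposition exists because any convex function on a finite 1D grid is a nonnegative combination of monotone step functions. Both sides of the incremental inequality behave additively under such an atomic decomposition (in the $\epsilon\to 0$ limit for the work function, and as a sum of disjoint $r$-intervals for the expected switching cost), so it suffices to prove the bound for a single atomic step.

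For a single atomic step, Lemma \ref{lem:uniquestates} singles out unique states $x\in B$ and $y\in A$ (independent of $r$) such that whenever the algorithm moves at time $t$, it moves from $x$ to $y$. The set of $r\in(-1,1)$ for which this movement actually occurs forms an interval whose length $\Delta$ is, by direct computation on $Y^t$, of order $\epsilon/[\theta(N(x)-N(y))/\|1\|]$; the expected switching cost is then $(\Delta/2)\cdot\|x-y\|$. Meanwhile, using Lemmas \ref{lem:workstruct} and \ref{lem:chain} to trace through the work function update, the increases $w^t(x_1)-w^{t-1}(x_1)$ and $w^t(x_m)-w^{t-1}(x_m)$ can be expressed in terms of how the atomic cost propagates outward from $x$; their sum should come out to exactly $\epsilon\cdot(\|x-x_1\|+\|x_m-x\|)/\|1\|$ times the appropriate factor, which by the triangle identity in 1D equals $\epsilon\cdot\|x_m-x_1\|/\|1\|$ modulo boundary terms. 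Dividing by $2\theta$ and comparing with $(\Delta/2)\|x-y\|$ yields the incremental bound.

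The main obstacle I expect is the bookkeeping in the last step: precisely tracking how an $\epsilon$-increment at the step location $i$ propagates to simultaneous increments at both boundary work-function values, and showing that the sum of those boundary increments times $\|x-y\|$ dominates $\|x-x_1\|+\|x_m-x\|$ times the actual movement probability. This ultimately rests on the 1D geometric fact that $\|x-y\|\leq\|x-x_1\|+\|x_m-x\|$ together with the relationship between $N(x)-N(y)$ and $\|x-y\|$ imposed by $N=\theta\|\cdot\|$. Edge cases where $B$ or $A$ is empty (no movement possible) and where $r^*$ lies outside $(-1,1)$ contribute nothing to expected switching cost but may still add to the potential, so they preserve the inequality trivially.
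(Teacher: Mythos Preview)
Your overall architecture---induction on $t$, the base case, the decomposition of $c^t$ into monotone $\epsilon$-steps, and the appeal to Lemma~\ref{lem:uniquestates}---matches the paper exactly. The gap is in the final accounting, where you overcomplicate both sides of the incremental inequality and end up with the wrong quantities.

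On the switching-cost side, the condition ``algorithm was at $x$ before and is at $y$ after'' is $Y^{t}(x)\le Y^{t}(y)$ together with $Y^{t+1}(y)\le Y^{t+1}(x)$. Writing these out with $Y^{t}(\cdot)=w^{t-1}(\cdot)+r\theta\|1\|(\cdot)$ and using $w^{t}(x)=w^{t-1}(x)+\epsilon$, $w^{t}(y)=w^{t-1}(y)$, one finds that $r$ is confined to an interval of length $\epsilon/(\theta\|x-y\|)$, so the probability of moving is $\epsilon/(2\theta\|x-y\|)$. Multiplying by the distance moved, the expected switching-cost increment is simply $\epsilon/(2\theta)$: the factor $\|x-y\|$ \emph{cancels}. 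There is no need for the expression $\epsilon/[\theta(N(x)-N(y))/\|1\|]$ or for any later comparison of $\|x-y\|$ against boundary distances.

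On the potential side, your claim that $w^t(x_1)-w^{t-1}(x_1)+w^t(x_m)-w^{t-1}(x_m)$ equals something like $\epsilon(\|x-x_1\|+\|x_m-x\|)/\|1\|$ is incorrect: a work-function value can never increase by more than the maximum cost applied, which here is $\epsilon$. The actual argument is much simpler. For a non-decreasing atomic step $(0,\dots,0,\epsilon,\dots,\epsilon)$, the algorithm moves only if $w^{t}(x)=w^{t-1}(x)+\epsilon$; Lemma~\ref{lem:chain} then forces $w^{t}(z)\ge w^{t-1}(z)+\epsilon$ for all $z\ge x$, and since $w^t(z)\le w^{t-1}(z)+c^t(z)=w^{t-1}(z)+\epsilon$ this is an equality at $z=x_m$. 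Hence $w^t(x_m)-w^{t-1}(x_m)=\epsilon$, while $w^t(x_1)-w^{t-1}(x_1)\ge 0$ trivially, giving $\phi^t-\phi^{t-1}\ge \epsilon/(2\theta)$. The non-increasing case is symmetric with $x_1$ playing the role of $x_m$. No triangle identity or ``propagation outward from $x$'' is involved; exactly one boundary absorbs the full $\epsilon$, and that already matches the switching-cost increment.
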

\begin{proof}
We prove this lemma by induction on $t$.  At time $t = 0$, clearly it is true since the left hand side $\esp{SC^0} = 0$,
while the right hand side
$\phi^0 = \frac{1}{2\theta}(w^0(x_1) + w^0(x_m)) - \frac{\|x_m-x_1\|}{2} = \frac{1}{2\theta}(0 + \theta\|x_m-x_1\|) - \frac{\|x_m-x_1\|}{2} = 0$.
We now argue that at each time step, the increase in the left hand
side is at most the increase in the right hand side.

Since the operating cost is convex, it is non-increasing until some point $x_{min}$ and then non-decreasing over the set $M$.
We can imagine our cost vector arriving in $\epsilon$-sized increments as follows.  We imagine sorting the cost values so that
$c^t(i_1) \leq c^t(i_2) \leq \cdots \leq c^t(i_m)$, and then view time step $t$ as a series of smaller time steps where we apply a cost
of $\epsilon$ to all states for the first ${c^t(i_1)}/{\epsilon}$ time steps, followed by applying a cost of $\epsilon$ to
all states except state $i_1$ for the next $({c^t(i_2) - c^t(i_1)})/{\epsilon}$ time steps (if each such cost vector's entries
strictly decrease at some point and then strictly increase at some point, we split the vector into two vectors which
add up to the original, one of which is non-increasing and the other of which is non-decreasing), etc.,
where $\epsilon$ has a very small value.  If adding these $\epsilon$-sized cost vectors would cause us to exceed the
original cost $c^t(i_k)$ for some $k$, then we just use the residual $\epsilon'<\epsilon$ in the last round in
which state $i_k$ has non-zero cost.  Eventually, these $\epsilon$-sized cost vectors add up precisely to the original
cost vector $c^t$.  Under these new cost vectors, the algorithm's switching cost does not get better (and the optimal
solution does not get worse).  If the left hand side does not increase at all from time step $t-1$ to $t$, then
the lemma holds (since the right hand side can only increase).

Our expected switching cost is the probability that the algorithm moves multiplied by the distance moved.
Suppose the algorithm is currently in state $x$.  Observe that, by Lemma~\ref{lem:uniquestates}, there is
only one state the algorithm could be moving from (state $x$) and only one state $y$ the algorithm could be
moving to, both of which do not depend on the randomness $r$ (we can choose $\epsilon$ to be sufficiently
small in order to guarantee this).  Moreover, we would never move to a state where the work function increases
by $\epsilon$.  First we consider the case~$x \geq x_{min}$.

The only reason we would move from state $x$ is if $w^t(x)$ increases from the previous time step,
so that we have $w^t(x) = w^{t-1}(x) + \epsilon$.  By Lemma~\ref{lem:chain}, we know
$w^t(z) = w^{t-1}(z) + \epsilon$ for all $z \geq x$.
Hence, we can conclude a couple of facts.  The state $y$ we move to cannot be such that $y \geq x$.
Moreover, we also know that $w^t(x_m) = w^{t-1}(x_m) + \epsilon$ (since $x_m \geq x$).  Notice that for us to move
from state $x$ to state $y$, the random value $r$ must fall within a very specific range.  In particular, we must have
$Y^{t}(x) \leq Y^{t}(y)$ and $Y^{t+1}(y) \leq Y^{t+1}(x)$:
\begin{align*}
    &(w^{t-1}(x) + \theta r \|1\|x \leq w^{t-1}(y) + \theta r \|1\|y ) \\
    &\: \wedge \: (w^t(y) + \theta r \|1\|y \leq w^t(x) + \theta r \|1\|x ) \\
  \Longrightarrow &
    w^{t-1}(y) - w^{t-1}(x) - \epsilon \leq w^t(y) - w^t(x) \\
    & \leq \theta r \|x-y\| \leq w^{t-1}(y) - w^{t-1}(x).
\end{align*}
This means $r$ must fall within an interval of length at most
${\epsilon}/{\theta\|x - y\|}$.  Since $r$ is chosen from an interval of length 2,
this happens with probability at most ${\epsilon}/({2\theta\|x - y\|})$.  Hence, the increase
in our expected switching cost is at most $\|x - y\|\cdot{\epsilon}/({2\theta\|x - y\|})  = {\epsilon}/{2\theta}$.  On the other
hand, the increase in the right hand side is $\phi^t - \phi^{t-1} = \frac{1}{2\theta}(w^t(x_1) - w^{t-1}(x_1) + w^t(x_m) - w^{t-1}(x_m))
\geq {\epsilon}/{2\theta}$ (since $w^t(x_m) = w^{t-1}(x_m) + \epsilon)$.  The case when $x < x_{min}$ is symmetric.  This
finishes the inductive claim.
\end{proof}

Now we prove the expected switching cost of DRBG is at most the total cost of the optimal solution for the discrete problem.

By Lemma~\ref{lem:potential}, for all times $t$ we have $\esp{SC^t} \leq \phi^t$.  Denote by $OPT^t$ the
optimal solution at time $t$ (so that $OPT^t = \min_x w^t(x)$ and $OPT^T =
OPT_N$).  Let $x^* = \arg\min_x w^t(x)$
be the final state which realizes $OPT^t$ at time $t$.  We have, for all times $t$:
\begin{align*}
\esp{SC^t} \leq& \phi^t = \frac{1}{2\theta}(w^t(x_1) + w^t(x_m)) - \frac{\|x_m-x_1\|}{2}\\
\leq& \frac{1}{2\theta}(w^t(x^*) + \theta\|x^* - x_1\| + w^t(x^*) + \theta\|x_m  - x^*\|)
- \frac{\|x_m-x_1\|}{2} = \frac{1}{\theta}OPT^t.
\end{align*}
In particular, the equation holds at time $T$, which gives the bound.

\subsection{Convergence of DRBG to RBG}
In this section, we are going to show that if we keep splitting the state spacing $\delta$,
then the output of DRBG, which is denoted by $x_D^t$, converges to the output of RBG, which is denoted by $x_C^t$.
\begin{lemma} \label{lm:convergence}
Consider a SOCO with $F=[x_L,x_H]$.
Consider a sequence of discrete systems such that the state spacing $\delta \to 0$ and for each
system, $[x_1,x_m]=F$. Let $x_{i}$ denote the output of DRBG in the $i^{th}$ discrete system, and
$\hat{x}$ denote the output of RBG in the continuous system. Then the sequence $\{x_i\}$  converges
to $\hat{x}$ with probability $1$ as $i$ increases, i.e., for all $t$, $\lim_{i\to\infty} |x_i^t-\hat{x}^t|=0$
with probability $1$.
\end{lemma}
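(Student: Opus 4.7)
The plan is to establish convergence in three stages: first show that the continuous work function is well-behaved (Lipschitz/continuous), then show that the discrete work function converges uniformly to its continuous counterpart as $\delta \to 0$, and finally use this to show that the discrete minimizers of $Y^t$ converge to the continuous minimizer, almost surely over the randomness $r$.

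First, I would prove by induction on $t$ that the continuous work function $w^t_C$ is convex and Lipschitz on $F$, with a Lipschitz constant controlled by $N$ (since $N(\cdot) = \theta \|\cdot\|$ on $\mathbb{R}$). The base case $w^0 = N$ is immediate. For the inductive step, $w^t_C(x) = \min_y\{w^{t-1}_C(y) + c^t(y) + N(x-y)\}$ is an infimal convolution of convex functions (where $c^t$ is convex by hypothesis and $N$ is a norm), so convexity is preserved; the Lipschitz bound comes from the triangle inequality applied to $N$.

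Second, I would show uniform convergence $\sup_{x \in M_i} |w^t_{D,i}(x) - w^t_C(x)| \to 0$ as $i \to \infty$ (so $\delta_i \to 0$), again by induction on $t$. The base case is trivial since $w^0_D$ and $w^0_C$ agree on grid points. For the inductive step, fix $x_k \in M_i$. Clearly $w^t_{D,i}(x_k) \ge w^t_C(x_k)$ since the minimization is over a smaller set. For the reverse direction, let $y^*$ achieve (or nearly achieve) the continuous minimum, and let $y_j \in M_i$ be the nearest grid point to $y^*$; then $|y_j - y^*| \le \delta_i/2$, and by Lipschitz continuity of $w^{t-1}_C$, $c^t$, and $N$, the value $w^{t-1}_{D,i}(y_j) + c^t(y_j) + N(x_k - y_j)$ differs from $w^{t-1}_C(y^*) + c^t(y^*) + N(x_k - y^*)$ by $O(\delta_i) + o(1)$, where the $o(1)$ comes from the induction hypothesis. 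Uniform convergence follows.

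Third, I would argue argmin convergence. Define $Y^t_C(x) = w^{t-1}_C(x) + rN(x)$ and analogously $Y^t_{D,i}$. By the previous step, $Y^t_{D,i} \to Y^t_C$ uniformly. Since $Y^t_C$ is a sum of a convex function and a linear function of $x$ (on a one-dimensional interval), the set of $r \in (-1,1)$ for which $Y^t_C$ has a non-unique minimizer is at most countable (these are the slopes at which a supporting line touches $w^{t-1}_C$ along a flat segment), hence has Lebesgue measure zero. With probability one over $r$, there is a unique continuous minimizer $\hat{x}^t$, and standard arguments (if $x_i^t \not\to \hat{x}^t$ along a subsequence, extract a further convergent subsequence with limit $x^\infty \ne \hat{x}^t$, then uniform convergence forces $Y^t_C(x^\infty) \le Y^t_C(\hat{x}^t)$, contradicting uniqueness) give $x_i^t \to \hat{x}^t$. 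Induct on $t$ to handle the dependence of $w^t$ on the previously chosen actions (none, actually, since $w^t$ is defined purely from the $c^t$ sequence, not from the algorithm's choices, which simplifies things considerably).

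The main obstacle is the handling of the null set of $r$ where ties occur in the argmin, together with making sure the uniform convergence of $w^t$ propagates cleanly through the induction. The uniqueness-almost-surely argument relies on the piecewise structure of $w^{t-1}_C$ as a convex function, so some care is needed to confirm that flat regions of $w^{t-1}_C$ produce only countably many problematic slopes $r$ across all $t \le T$, but since $T$ is finite the union remains measure zero.
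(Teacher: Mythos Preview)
Your proposal is correct and follows a route close in spirit to the paper's, but with a cleaner packaging of the key estimate. Both arguments (i) show $w^t_D \ge w^t_C$ on the grid (you state this as ``minimization over a smaller set''; the paper uses it implicitly as $Y_C^t(x_D^t)\le Y_D^t(x_D^t)$), (ii) bound $w^t_D - w^t_C$ from above by $O(t\delta)$, (iii) argue that for all but countably many $r$ the continuous minimizer of $Y^t_C$ is unique, and (iv) deduce argmin convergence. The differences are in (ii) and (iv): the paper obtains the upper bound by rounding the continuous optimal trajectory to the grid timeslot by timeslot and summing the per-step errors into the sandwich $Y_C^t(\bar{x}_C^t)\le Y_C^t(x_i^t)\le Y_D^t(\bar{x}_C^t)\le Y_C^t(\bar{x}_C^t)+(H\delta+2\theta\|\delta\|)t$, whereas you prove uniform convergence of $w^t_{D,i}\to w^t_C$ directly by induction on $t$; and for (iv) the paper inverts $Y_C^t$ on each side of the minimizer, while you use the standard compactness/subsequence argument. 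Your formulation is more modular; the paper's rounding argument has the minor advantage that it is reused verbatim to prove the companion Lemma on convergence of $OPT_D^i\to OPT_C$.

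One small point worth flagging: you invoke Lipschitz continuity of $c^t$ on all of $F$, whereas the paper's proof is more cautious and only uses local Lipschitz bounds on compact subsets of the interior (splitting into monotone and non-monotone cases). Under the paper's standing hypothesis that subgradients are uniformly bounded, your simplification is justified; but if one wanted the lemma to stand under the weaker hypothesis that $c^t$ is merely convex and finite on $F$, the paper's extra care would be needed.
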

\begin{proof}
To prove the lemma, we just need to show that $x_i$ converges pointwise to $\hat{x}$ with probability~$1$.
For a given $\delta$, let $Y_D^t$ denote the function $Y^t$ used by DRBG in the discrete system (with feasible set
$M=\{x_1, \dots,x_m\}\subset F$) and $Y_C^t$ denote the function $Y^t$ used by RBG in the continuous
system (with feasible set $F$).  The output of DRBG and RBG at time $t$ are denoted by $x_D^t$ and $x_C^t$ respectively.
The subsequence on which $|x_C^t-x_D^t|\le 2\delta$ clearly has $x_D^t$ converge to $x_C^t$.  Now consider the
subsequence on which this does not hold.  For each such system, we can find an $\bar{x}_C^t \in \{x_1, \dots,x_m\}$
satisfying $|\bar{x}_C^t - x_C^t| < \delta$ (and thus $|\bar{x}_C^t - x_D^t| \ge \delta$) such that
$Y_C^t(\bar{x}_C^t) \le Y_C^t(x_D^t)$, by the convexity\footnote{The minimum of a convex function over a convex set is convex.
Thus, by definition, $w^t$ is a convex function by induction, and hence $Y_C^t$ is convex as well.} of $Y_C^t$. Moreover,
$Y_D^t(x_D^t) \le Y_D^t(\bar{x}_C^t)$ and $Y_C^t(x_D^t) \le Y_D^t(x_D^t)$. So far, we have only rounded the $t^{th}$ component.
Now let us consider a scheme that rounds to the set $M$ all components $\tau<t$ of a solution to the continuous problem.

For an arbitrary trajectory $x=(x^t)_{t=1}^T$, define a sequence $x_R(x)$ with $x_R^\tau \in \{x_1, \dots,x_m\}$ as follows.
Let $l=\max\{k:x_k\le x^\tau\}$. Set $x_R^\tau$ to $x_l$ if $c^\tau(x_l) \le c^\tau(x_{l+1})$ or $l=m$, and $x_{l+1}$ otherwise.
This rounding increases the switching cost by at most $2\theta\|\delta\|$ for each timeslot.  If $l=m$ then the operating cost
is unchanged.  Next, we bound the increase in operating cost when $l<m$.

For each timeslot $\tau$, depending on the shape of $c^\tau$ on $(x_l,x_{l+1})$, we may have two cases: (i) $c^\tau$ is monotone;
(ii) $c^\tau$ is non-monotone.  In case (i), the rounding does not make the operating cost increase for this timeslot. Note that
if $x_C^\tau \in \{x_L,x_H\}$ then for all sufficiently small $\delta$, case (ii) cannot occur, since the location of the minimum
of $c^\tau$ is independent of $\delta$.  We now consider case (ii) with $x_C^\tau \in (x_L,x_H)$.  Note that there must be a finite
left and right derivative of $c^\tau$ at all points in $(x_L,x_H)$ for $c^\tau$ to be finite on $F$.  Hence, these derivatives must
be bounded on any compact subset of $(x_L,x_H)$.  Since $x_C^\tau \in (x_L, x_H)$, there exists a set $[x_L',x_H'] \subset (x_L,x_H)$
independent of $\delta$ such that, for sufficiently small $\delta$, we have $[x_l,x_{l+1}] \subset [x'_L,x'_H]$.  Hence, there exists
an $H^\tau$ such that, for sufficiently small $\delta$, the gradient of $c^\tau$ is bounded by $H^\tau$ on $[x_l,x_{l+1}]$.  Thus, for
sufficiently small $\delta$, the rounding makes the operating cost increase by at most $H^\tau \delta$ in timeslot $\tau$.

Define $H=\max_\tau\{H^\tau\}$.  If we apply this scheme to the trajectory
which achieves $Y_C^t(\bar{x}_C^t)$, we get a decision sequence in the discrete system with $cost+r\theta\|\bar{x}_C^t\|$ not more than $Y_C^t(\bar{x}_C^t) + (H
\delta +2\theta\|\delta\|)t$ (by the foregoing bound on the increase in costs) and not less than $Y_D^t(\bar{x}_C^t)$ (because the solution of $Y_D^t(\bar{x}_C^t)$
minimizes $cost+r\theta\|\bar{x}_C^t\|$). Specifically, we have $Y_D^t(\bar{x}_C^t) \le Y_C^t(\bar{x}_C^t) + (H \delta +2\theta\|\delta\|)t$. Therefore,
\begin{align*}
      Y_C^t(\bar{x}_C^t)
	\le Y_C^t(x_i^t) = Y_C^t(x_D^t)
	\le Y_D^t(x_D^t)
	\le Y_D^t(\bar{x}_C^t)
	\le Y_C^t(\bar{x}_C^t) + (H \delta +2\theta\|\delta\|)t.
\end{align*}
\noindent Notice that the gradient bound $H$ is independent of $\delta$, and so $(H \delta +2\theta\|\delta\|)t \to 0$
as $\delta\to0$.  Therefore, $| Y_C^t(x_i^t) - Y_C^t(\bar{x}_C^t)|$ converges to $0$ as $i$ increases.

Independent of the random choice $r$, the domain of $w_C^t(\cdot)$ can be divided into countably many non-singleton intervals
on which $w_C^t(\cdot)$ is affine, joined by intervals on which it is strictly convex.  Then $Y_C^t(\cdot)$ has a unique
minimum unless $-r$ is equal to the slope of one of the former intervals, since $Y_C^t(\cdot)$ is convex.
Hence, it has a unique minimum with probability one with respect to the choice of $r$.

Hence, with probability one, $x_C^t$ is the unique minimum of $Y_C^t$. To see that $Y_C^t(\cdot)$ is continuous at
any point $a$, apply the squeeze principle to the inequality $w_C^t(a) \le w^t_C(x) + \theta\|x - a\| \le w^t_C(a) + 2\theta\|x - a\|$,
and note that $Y_C^t(\cdot)$ is $w^t(\cdot)$ plus a continuous function.  The convergence of $|\bar{x}_C^t - x_C^t|$
then implies $| Y_C^t(\bar{x}_C^t) - Y_C^t(x_C^t)|\to0$ and thus $| Y_C^t(x_i^t) - Y_C^t(x_C^t)|\to0$, or
equivalently $Y_C^t(x_i^t) \to Y_C^t(x_C^t)$.  Note that the restriction of $Y_C^t$ to $[x_L,x_C^t]$ has a well-defined
inverse $Y^{-1}$, which is continuous at $Y_C^t(x_C^t)$.  Hence, for the subsequence of $x_i^t$ such that
$x_i^t\le x_C^t$, we have $x_i^t = Y^{-1}(Y_C^t(x_i^t)) \to Y^{-1}(Y_C^t(x_C^t)) = x_C^t$.  Similarly,
the subsequence such that $x_i^t \ge x_C^t$ also converges to $x_C^t$.
\end{proof}

\subsection{Convergence of OPT in Discrete Systems}
To show that the competitive ratio holds for RBG, we also need to show that the optimal costs converge to those of the continuous system.

\begin{lemma} \label{lm:convergenceOPT}
Consider a SOCO problem with $F=[x_L,x_H]$.
Consider a sequence of discrete systems such that the state spacing $\delta \to 0$ and for each system, $[x_1,x_m]=F$. Let $OPT^i_D$ denote the optimal cost in the $i^{th}$ discrete system, and $OPT_C$ denote the optimal cost in the continuous system (both under the norm $N$). Then the sequence $\{OPT^i_D\}$  converges to $OPT_C$ as $i$ increases, i.e., $\lim_{i\to\infty} |OPT^i_D-OPT_C|=0$.
\end{lemma}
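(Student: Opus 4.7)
The plan is to sandwich $OPT^i_D$ between $OPT_C$ and $OPT_C + o(1)$. The lower bound $OPT_C \le OPT^i_D$ is immediate: since $\{x_1, \ldots, x_m\} \subseteq F$, every trajectory feasible in the $i$-th discrete system is also feasible in the continuous system with identical cost (both problems use the same cost functions and the same norm $N$ in the switching cost). It therefore suffices to establish the matching upper bound $OPT^i_D \le OPT_C + \epsilon_i$ with $\epsilon_i \to 0$.

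For the upper bound, I would take a continuous trajectory $x^* = (x^{*,1}, \ldots, x^{*,T})$ whose cost is within $\eta > 0$ of $OPT_C$, and apply the rounding scheme $x \mapsto x_R$ already defined in the proof of Lemma \ref{lm:convergence}. That scheme sends $x^{*,\tau}$ to whichever of $x_l$ or $x_{l+1}$ (with $l = \max\{k : x_k \le x^{*,\tau}\}$) avoids increasing the operating cost when $c^\tau$ is monotone on $[x_l, x_{l+1}]$. The bounds proved there carry over verbatim: the switching cost in any timeslot rises by at most $2 N(\delta)$, and the operating cost in any timeslot rises by $0$ when $c^\tau$ is monotone on $[x_l,x_{l+1}]$ or $l=m$, and by at most $H^\tau \delta$ otherwise, where $H^\tau$ is a uniform bound on the magnitude of the one-sided derivatives of $c^\tau$ on a fixed compact subset of $(x_L, x_H)$ that contains the rounding interval for all sufficiently small $\delta$. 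The boundary case $x^{*,\tau} \in \{x_L, x_H\}$ is dispatched exactly as in Lemma \ref{lm:convergence}, since the minimizer of $c^\tau$ has a fixed location independent of $\delta$, so the non-monotone scenario cannot arise for small enough $\delta$.

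Summing over $\tau = 1, \ldots, T$ and setting $H = \max_\tau H^\tau$ (a maximum over finitely many terms), one obtains
\[
\mathrm{cost}(x_R) \;\le\; OPT_C + \eta + 2 T N(\delta) + T H \delta.
\]
Since $x_R$ is feasible in the discrete system, $OPT^i_D \le \mathrm{cost}(x_R)$. Letting $\delta = \delta_i \to 0$ and then $\eta \to 0$ gives $\limsup_{i \to \infty} OPT^i_D \le OPT_C$, which together with the trivial lower bound yields $\lim_{i \to \infty} |OPT^i_D - OPT_C| = 0$.

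The only delicate point, already resolved in Lemma \ref{lm:convergence}, is controlling the per-timeslot operating cost inflation uniformly in $\tau$. This uses convexity of each $c^\tau$ to obtain locally bounded one-sided derivatives on the open interval $(x_L, x_H)$, together with the endpoint reduction when $x^{*,\tau}$ lies on the boundary of $F$. Beyond invoking that rounding machinery and the trivial feasibility inclusion, no new ideas are needed.
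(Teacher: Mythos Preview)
Your proposal is correct and follows essentially the same route as the paper: apply the rounding scheme from Lemma~\ref{lm:convergence} to an optimal (or near-optimal) continuous trajectory to get $OPT^i_D \le OPT_C + (H\delta + 2\theta\|\delta\|)T$, and combine with the trivial lower bound $OPT_C \le OPT^i_D$ from feasibility inclusion. The paper's version is terser---it applies the rounding directly to the optimal trajectory rather than an $\eta$-approximation and leaves the lower bound implicit---but the argument is the same.
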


\begin{proof}
We can apply the same rounding scheme in the proof of Lemma \ref{lm:convergence} to the solution vector of $OPT_C$ to get a discrete output with total cost bounded by
$OPT_C+(H \delta +2\theta\|\delta\|)T$, thus
$$OPT^i_D \le OPT_C+(H \delta +2\theta\|\delta\|)T.$$
Notice that the gradient bound $H$ is independent of $\delta$ and so $(H \delta +2\theta\|\delta\|)T\to 0$ as $\delta\to0$. Therefore, $OPT^i_D$ converges to $OPT_C$ as $i$ increases.
\end{proof}

\end{document}